\def \ver{techrep}
\def \verreview{review}
\def \vertechrep{techrep}
\newcommand{\changes}[1]{%
	\ifx\ver\verreview%
		{\textcolor{blue}{#1}}%
	\else {#1}%
	\fi%
}
\newcommand{\citeapp}[1]{%
	\ifx\ver\vertechrep%
		{Appendix \ref{#1}}%
	\else {\cite[Appendix \ref{#1}]{implicit_arxiv}}%
	\fi%
}
	\def\myFigureScale{0.8}%
\def\myFigureScale{0.6}%
\DeclareMathOperator*{\argmax}{arg\,max}
\newcommand{\Pa}{\mathcal P}
\newcommand{\x}{\mathbf x}
\newcommand{\s}{\mathbf s}
\newcommand{\X}{\mathbf X}
\newcommand{\y}{\mathbf y}
\newcommand{\E}{{\rm I\kern-.3em E}}
\newcommand{\lru}{\textproc{LRU}}
\newcommand{\lruone}{\textproc{LRU-One}}
\newcommand{\qlruone}{\textproc{qLRU-One}}
\newcommand{\fifoone}{\textproc{FIFO-One}}
\newcommand{\fifoblind}{\textproc{FIFO-Blind}}
\newcommand{\fifolazy}{\textproc{FIFO-Lazy}}
\newcommand{\qlrublind}{\textproc{qLRU-Blind}}
\newcommand{\lruall}{\textproc{LRU-All}}
\newcommand{\qlru}{\textproc{qLRU}}
\newcommand{\qlrulazy}{\textproc{qLRU-Lazy}}
\newcommand{\fifo}{\textproc{FIFO}}
\newcommand{\random}{\textproc{RANDOM}}
\newcommand{\twolru}{\textproc{2LRU}}
\newcommand{\Klru}{\textproc{kLRU}}
\newcommand{\twolrulazy}{\textproc{2LRU-Lazy}}
\newcommand{\twolrublind}{\textproc{2LRU-Blind}}
\newcommand{\blind}{\emph{blind}}
\newcommand{\lazy}{\emph{lazy}}
\newcommand{\one}{\emph{one}}
\newcommand{\all}{\emph{all}}
\theoremstyle{plain}
\newtheorem{thm}{Theorem}[section]
\newtheorem{lem}[thm]{Lemma}
\newtheorem{prop}[thm]{Proposition}
\theoremstyle{definition}
\theoremstyle{remark}
\begin{document}
%

\title{Implicit Coordination of Caches in Small Cell Networks under Unknown Popularity Profiles} 


%
\author{
\IEEEauthorblockN{Emilio Leonardi\IEEEauthorrefmark{2} and Giovanni Neglia\IEEEauthorrefmark{1}\\
}
\IEEEauthorblockA{\IEEEauthorrefmark{2}Politecnico di Torino, Italy, leonardi@polito.it}\\
\IEEEauthorblockA{\IEEEauthorrefmark{1}Universit\'e C\^ote d'Azur -  Inria, France, giovanni.neglia@inria.fr}
}



\maketitle

\begin{abstract}
We focus on a dense cellular network, in which a limited-size cache is available at every Base Station (BS).
In order to optimize the overall performance of the system in such scenario, where  a significant  fraction of the users is covered by several BSs,
a tight coordination among nearby caches is needed. 
To this end,   this paper   introduces a class of simple and fully distributed caching policies, which
require neither direct communication among  BSs, nor
 a priori knowledge of content popularity. 
 Furthermore, we propose a novel approximate analytical methodology  
to assess the performance of interacting caches under such policies.
Our approach builds upon the well known characteristic time approximation~\cite{che02} 
and provides predictions  that are surprisingly accurate (hardly distinguishable from the simulations)
in most of the scenarios.
Both synthetic and trace-driven results show that the our caching policies achieve excellent performance (in some cases provably optimal). 
They outperform state-of-the-art dynamic policies for interacting caches, 
and, in some cases, also the greedy content placement, which is
known  to be the best performing  polynomial  algorithm  under static and perfectly-known content popularity profiles.

\end{abstract}


\section{Introduction}
\label{s:intro}

In the last years, with the advent and the proliferation of mobile devices (smart-phones, tablets),
along with  a constant increase of the overall traffic flowing over Internet, we have assisted to a radical shift of the traffic at the edge,
from the  wired/fixed  segment of the network to the wireless/mobile segment.   This trend is expected to continue and intensify in the next few years.
According to CISCO forecasts \cite{CISCO} in the 5 years ranging from 2016 to 2021 traffic demand on the cellular network 
will approximately  increase by a factor 8. Such traffic increase may pose a tremendous stress on the wireless infrastructure and can be satisfied only 
by densifying the cellular network and redesigning its infrastructure.  To this end,  the integration of  caches at the edge of the cellular 
network can be effective to  reduce the load on   the back-haul links.  Caches, indeed, by moving contents closer to the user, 
can effectively contribute to ``localize'' the traffic, and to achieve:
i) the reduction of the load on the \changes{core network and back-haul links}; ii)~the reduction of   the   
latency perceived by the user.

In this paper we focus our attention on a dense cellular network, where caches are placed at every Base Station (BS) and a significant fraction of the users can be served (is ``covered'') by two or more BSs (whose cells are said to ``overlap'').  In this context, an important open question is  how to effectively  coordinate  different 
edge-caches,  so  to  optimize  the  global performance (typically the hit ratio, i.e. the fraction of
users' requests that are satisfied by local caches).
Given the possibility of partial cell overlap, the cache coordination scheme 
should, indeed,    reach    an optimal trade-off
between  two somewhat  conflicting  targets:  i)~make  top-popular  contents  
available  everywhere, so to maximize the population of users who can  retrieve them from local caches,
ii)  diversify the contents stored at overlapping cells,
so to maximize the number of contents available to users in the overlap.
Optimal solutions can be easily figured out  for  the two extreme cases: 
when cells do not overlap, every cache should be filled with the most popular contents;
when cells overlap completely,  every  cache  should be filled with a different set of contents.
In the most general case, however, finding the optimal content allocation strategy  requires the solution of  an NP-hard problem~\cite{shanmugam13}. 

In this paper, we propose a class of  fully distributed schemes to coordinate caches in cellular systems with partially overlapping cells,
so to maximize the overall hit ratio. Our policies are very simple, and, differently from most of the previous work, do not require any a priori knowledge of content popularity. In addition, we propose a novel  analytical approach  to accurately evaluate the performance of  such caching systems with limited computational complexity.

\subsection{Related work}
\label{s:related}
Due to the space constraints,  we limit 
our description to the work that  specifically addresses the  
caching problem  in dense cellular networks. 

To the best of our knowledge, the idea to coordinate content placement at caches located at close-by BSs was first proposed in~\cite{Caire12} and its extension \cite{shanmugam13} under the name of femto-caching. This work assumes that requests follow the Independent Reference Model (IRM) and  geographical popularity profiles are available, i.e. content requests are independent and request rates are known for all the cell areas and their intersections. The optimal content placement to maximize the hit ratio has been formulated in terms of  
an NP-hard combinatorial problem. A greedy heuristic algorithm was then proposed and its performance analyzed. In particular the algorithm  is shown to guarantee a $\frac{1}{2}$-approximation of the maximum hit ratio. 
In \cite{Poularakis14},  the authors  have generalized the approach of \cite{Caire12,shanmugam13}, providing a formulation for the joint content-placement and user-association problem that   maximizes the hit ratio. Efficient heuristic solutions have also been proposed. 
Authors of \cite{Naveen15}  have included the bandwidth costs  in the formulation,
and have proposed an on-line algorithm for the solution of the resulting problem. \cite{tuholukova17} considers the case when small cells can coordinate not just in terms of what to cache but also to perform Joint Transmission. 
In \cite{Chattopadhyay16}, instead, the authors have designed a  distributed 
algorithm based on Gibbs sampling, which is shown to asymptotically converge to the optimal allocation.
\cite{Anastasios2}~revisits the  optimal  content placement problem within a 
stochastic geometry  framework. Under the assumption that  both  base stations and users are points of two  homogeneous spatial Poisson processes,
it derives an elegant analytical characterization of the optimal policy and its performance. More recently, in \cite{avrachenkov17} the authors have developed a few   asynchronous 
distributed cooperative content placement algorithms with polynomial complexity
and limited  communication overhead (communication takes place only between overlapping cells), whose performance has been shown to be very good 
in most of the tested scenarios. 

We would like to emphasize that all the previously proposed schemes, differently from ours, rely on  the original assumption in~\cite{Caire12} that geographical content popularity profiles are  known by the system. Therefore we will refer to these policies as \lq \lq informed'' ones.

Reliable popularity estimates over small geographical areas may be very hard to obtain~\cite{leconte16},  because i) most of the contents are highly ephemeral, ii)  few users are located in a small cell, and iii) users keep moving from one cell to another. On the contrary, policies like 
\lru{} and its variants (\qlru, \textsc{2LRU}, \dots) do not rely on popularity estimation---we call them ``uninformed''---and are known to 
well behave under  time-varying popularities. For this reason they are a de-facto standard in most of the deployed caching systems.
\cite{giovanidis16} proposes a generalization of \lru{} to a dense cellular scenario. As above, a user at the intersection of multiple cells, can check the availability 
of the content at every covering cell and then download from one of it. 
The difference with respect  to standard \lru{} is how cache states are updated. In particular, the authors of~\cite{giovanidis16} consider two schemes: \lruone{} and \lruall. In \lruone{} each user is assigned to a reference cell/cache and only the state of her reference cache is updated upon a hit or a miss, independently from which cache the content has been retrieved from.\footnote{
	The verbal description of the policy in~\cite{giovanidis16} is a bit ambiguous, but the model equation shows that the state of a cache is updated by and only by the requests originated in the corresponding Voronoi cell, i.e.~from the users closest to the cache.
} In \lruall{} the state of all the caches covering the user is updated. These policies do not require communication among caches. Moreover, their analysis is relatively easy because each cache can be studied as an isolated one.
 Unfortunately, these policies typically perform significantly worse than informed schemes (see for example the experiments in \cite{avrachenkov17}).

\subsection{Paper Contribution}
\label{s:contri}
\changes{ This paper has four main contributions.}

\changes{First, we propose in Sec.~\ref{s:model} a novel approximate analytical approach  to study systems of interacting caches, under different caching policies.  
Our framework builds upon the well known  characteristic time  approximation~\cite{che02} for individual caches,
and, in most of the scenarios, provides predictions  that are surprisingly accurate (practically indistinguishable from the simulations, as shown in Sec~\ref{s:validation}).}

\changes{Second, we propose a class of simple and fully distributed  ``uniformed''  schemes that effectively coordinate  different caches in a dense cellular scenario in order  to maximize the overall  hit ratio of the caching system.   Our schemes represent an enhancement of  those proposed in~\cite{giovanidis16}. As \lruone{} and \lruall, our policies neither rely on popularity estimation, nor require communication exchange among the BSs. The schemes achieve \emph{implicit} coordination among the caches through specific cache update rules, which are driven by the users' requests. Differently from \cite{giovanidis16}, our update rules tend to couple the states of different caches. Despite the additional complexity, we show that accurate analytical evaluation of the system is still possible through our approximate analytical approach.}

\changes{Third, we rely on our analytical model to show that, under IRM, our policies can significantly outperform other uninformed policies like those in~\cite{giovanidis16}. Moreover, the hit ratios are very close to those offered by the greedy scheme proposed in \cite{Caire12} under perfect knowledge of popularity profiles.  
More precisely,  we can prove  that, under some geometrical assumptions, a variant of \qlru{} asymptotically converges to the optimal static content allocation, while  in more  general scenarios,  the same version of \qlru{}  asymptotically achieves a locally optimal configuration.}

\changes{Finally, in Sec.~\ref{s:realistic}, we carry on simulations using a  request trace from a major CDN provider and BS locations in Berlin, Germany. The simulations confirm qualitatively the model's results. Moreover, under the real request trace, our dynamic policies can sometimes outperform the greedy static allocation that knows in advance the future request rate. This happens even more often in the realistic case when future popularity needs to be estimated from past statistics. Overall, our results suggest that it is better to rely on uninformed caching schemes with smart update rules than on informed ones fed by estimated popularity profiles, partially contradicting some of the conclusions of~\cite{elayoubi15}.}

\section{Network Operation}
\label{s:operation}


We consider a set of $B$ base stations (BSs) arbitrarily located in a given region $R \subseteq \mathbf R^2$, each equipped with a local cache.
Our system operates as follows. When user $u$ has a request for content $f$, it broadcasts an inquiry message to the set of BSs ($I_u$) it can communicate with. The subset ($J_{u,f}$) of those BSs that have the content $f$ stored locally declare its availability to user $u$. If any local copy is available ($J_{u,f}\neq \emptyset$), the user sends an explicit request to download it to one of the BSs in $J_{u,f}$. Otherwise, the user sends the request to one of the BSs in $I_u$, which will need to retrieve it from the content provider.\footnote{
	\changes{This two-step procedure introduces some additional delay, but this is inevitable in any femtocaching scheme where the BSs need to coordinate to serve the content.}
} Different user criteria  can be defined to select the BS to download from; for the sake of simplicity, in this paper, we assume that the user selects uniformly at random one of them. 
However, the analysis developed in the next section extends naturally under  general selection  criteria.
The selected BS serves the request. 
Furthermore, an opportunely defined  subset of BSs in $I_u$  updates its cache according to a local caching policy,  like \lru{}, \qlru{}\footnote{
In the case of \qlru, the cache will move the content requested to the front of the queue upon a hit and will store the content at the front of the cache with probability $q$ upon a miss. \lru{} is a \qlru{} policy with $q=1$. 
} or \twolru,\footnote{\twolru{} uses internally two \lru{} caches:  one for the metadata, 
and the other for the actual contents  (for this reason we say that \twolru{} is a {two-stage} cache). 
Upon a miss, a content is stored in the second cache only if its metadata are already present in the first cache. See \cite{garetto16} for a more detailed description.}  etc.
The most natural  update rule is that only the cache serving the content updates its state independently from the identity of the user generating the specific request. We call this update rule \blind. 
\changes{At the same time, it is possible to decouple content retrieval from cache state update as proposed in~\cite{giovanidis16}.  For example each user $u$  may be statically  associated to a given BS,  whose state is updated upon every request from $u$ independently from which BS has 
served the content. We refer to this update rule as \one, because of the name of the corresponding policy proposed in~\cite{giovanidis16} (\lruone).}  Similarly, we indicate as \all{} the update rule where all the BSs in $I_u$  update their state upon a request from user $u$ (as in \lruall). These update rules can be freely combined with existing usual single cache policies, like \qlru, \lru, \twolru, etc., leading then to schemes like \lruone, \qlrublind, \textsc{2LRU-All}, etc., with obvious interpretation of the names.

The analytical framework presented in the next section allows us to study a larger set of update rules, where the update can in general depend on the identity of the user as well as on the current set $J_{u,f}$ of caches from which $u$ can retrieve the content.  When coupled with local caching policies, these update rules do not require any explicit information exchange among the caches, but they can be implemented by simply piggybacking the required information ($J_{u,f}$) to user $u$'s request. In particular, in what follows, we will consider the \lazy{} update rule, according to which 
\begin{enumerate}
\item only the cache serving the content may update its state,
\item but it does only if no other cache could have served the content to the user (i.e.~only if $|J_{u,f}|\le 1$).
\end{enumerate}
This rule requires only an additional bit to be transmitted from the user to the cache. We are going to show in Sec.~\ref{s:lazyqlru} that such bit is sufficient to achieve a high level of coordination among  different caches and, therefore, a high hit ratio. Because no communication among BSs is required, we talk about \emph{implicit coordination}. For the moment, the reader should not be worried if he/she finds the rationale behind \lazy{} obscure and can regard \lazy{} as a specific update rule  among many others possible.
\changes{Table~\ref{notation-summary} summarises the main notation used in this paper.}

\section{Model}
\label{s:model}

\begin{table}
\changes{
\begin{center}
\caption{Summary of the main notation}
\label{notation-summary}
\begin{tabular}{|l|l|}
\hline
Symbol &  Explanation \\
\hline
$t$ & time\\
$ u$ & generic user\\
$f $ & generic content\\
$b$ & generic cell \\
$F$  & number of files\\
$B$  & number of base stations\\
$C$     & cache size\\ 
$I_u$ & set of BSs communicating with $u$\\
$J_{u,f}$ & set of BSs able to provide $f$ to $u$ \\
$S_b$ & surface of cell b\\
$\mu(A)$ &  expected number of users in region $A$ \\
$\Lambda_f(A)$ & content $f$ request rate from region $A$ \\
$\X_f$   & configuration of content $f$ in caches \\
$x_f^{(b)}$ &  component $b$ of $\X_f$: $x_f^{(b)}\in \{0,1\}$  \\
$\x_f^{(-b)}$ &   configuration of content $f$ in all caches but $b$  \\
$T_c^{(b)}$ & characteristic time at cache $b$ \\
$T_{S,f}^{(b)}$ & sojourn time of content $f$ in cache $b$ \\
$\nu_f^{(b)}$ &  transition rate $1\to 0$ for $x_f^b$  given $\x_f^{(-b)}$\\
$\alpha_f^{(b)}$ & transition rate $0\to 1$ for  $x_f^b$ given $\x_f^{(-b)}$\\
$d(s,A)$ & distance between point $s$ and region $A$\\
\hline 
\end{tabular}
\end{center}
}
\end{table}

We assume that mobile users are spread over the region $R$ according to a Poisson point process with density $\mu()$, so that $\mu(A)$ denotes the expected number of users in a given area $A\subseteq R$. 
Users generate independent requests for $F$ possible contents. In particular, a given user requests  content $f$  according to a Poisson process with rate $\lambda_f$. It follows that the aggregate request process  for content $f$ from all the users located in $A$ is also a Poisson process with rate $\Lambda_f(A)=\mu(A) \lambda_f$. \changes{ For the sake of presentation, in what follows we will consider that users' density is constant over the region, so that $\mu(A)$ is simply proportional to the surface of $A$, but our results can be easily generalized.} 
Our analysis can also be extended  to a more complex content request model that takes into account temporal locality~\cite{traverso13} as we discuss in Sec.~\ref{s:temporal_locality}.
Contents (i.e. caching units) are assumed to have the same size. This assumption
can be justified in light of the fact that often contents  correspond to  chunks in which  larger files are broken. In any case, it is possible to extend the model, and most of the analytical results below, to the case of heterogeneous size contents.\footnote{
Similar results for \qlrulazy{} in Sec.~\ref{s:qlrulazy_convergence} hold if we let the parameter $q$ be inversely proportional to the content size as done in~\cite{neglia17tompecs}. 
} For the sake of simplicity, we assume that each cache is able to store $C$ contents.
Finally, let $S_b$ denote the coverage area of BS $b$. 

In what follows, we  first present some key observations for an isolated cache, and then we extend our investigation to interacting caches when cells overlap. We will first consider the more natural \blind{} update rule, according to which any request served by a BS triggers a corresponding cache state update. 
We will then discuss 
how to extend the model to other update rules in Sec.~\ref{s:update_rules}.

\subsection{A single cell in isolation}
\label{s:single_cell}
We start considering a single BS, say it $b$, with cell size $S_b$. The request rate per content $f$ is then $\Lambda_f^{(b)}(S_b)=\mu(S_b) \lambda_f$. We omit in what follows the dependence on $S_b$.

Our analysis relies on the now standard cache characteristic time approximation (CTA) for a cache in isolation, 
which is known to be one of the most effective approximate approaches for analysis of caching systems.\footnote{Unfortunately, the computational cost to exactly analyse even a single
LRU (Least Recently Used) cache, grows exponentially with both the  cache  size  and  the  number of  contents \cite{Dan90}.}
CTA was first introduced (and analytically justified) in~\cite{fagin77} and later rediscovered in~\cite{che02}. It was originally proposed for \lru{} under the IRM request process, and it has been later extended to different caching policies and different requests processes~\cite{garetto16,garetto15}. The characteristic time $T_c$ is the time a given content spends in the cache since its insertion until its eviction in absence of any request for it. In general, this time depends  in a complex way from the dynamics of other contents requests. Instead, the CTA assumes that $T_c$ 
 is  a random variable independent  from other contents dynamics and
with an assigned distribution (the same for every content). This assumption makes it possible to decouple the dynamics of the different contents: upon a miss for content $f$, the content is retrieved and a timer with random value $T_c$ is generated. When the timer expires, the content is evicted from the cache. Cache policies differ for i) the distribution of $T_c$ and ii) what happens to the timer upon a hit. For example, $T_c$ is a constant under \lru, \qlru, \twolru{} and \fifo{} and exponentially distributed under \random. Upon a hit, the timer is renewed under 
\lru, \qlru{} and \twolru, but not under \fifo{} or \random. Despite its simplicity, CTA  was shown  to provide asymptotically exact predictions for a single LRU cache under IRM  as the cache size grows large~\cite{fagin77,Jele99,fricker2012}.

What is important for our purposes is that, once inserted in the cache, a given content $f$ will sojourn in the cache for a random amount of time $T_{S,f}$, that can be characterized for the different policies. In particular, if the timer is not renewed upon a hit (as for \fifo{} and \random{}), it holds:
\[T_{S,f}^{(b)}=T_c^{(b)},\]
while if the timer is renewed, it holds:
\[T_{S,f}^{(b)}=\sum_{k=1}^M Y_k+ T_c^{(b)},\] 
where $M \in \{0,1, \dots\}$ is the number of consecutive hits preceding a miss and \changes{$Y_k$ is the time interval between the $k$-th hit and the previous content request}.  For example, in the case of \lru{} and \qlru{}, $M$ is distributed as a geometric random variable with parameter $p=1-e^{-\Lambda_f^{(b)} T_c^{(b)}}$, and $\{Y_k\}$ are i.i.d. truncated exponential random variables over the interval $[0,T_c^{(b)}]$.

We denote by $1/\nu_f^{(b)}$ the expected value of $T_{S,f}^{(b)}$, that is a function of the request arrival rate $\Lambda_f^{(b)}$.
\begin{equation}
\label{e:rate}
\frac{1}{\nu_f^{(b)}\!\!\left(\Lambda_f^{(b)}\right)}\triangleq \E[T_{S,f}^{(b)}].
\end{equation}
For example it holds:
\begin{itemize}
	\item[]\fifo, \random:\hspace{2mm}  $\nu_f^{(b)}\!\!\left(\Lambda_f^{(b)}\right)=1/ T_c^{(b)}$
	\item[]\lru, \qlru:\hspace{10mm} $\nu_f^{(b)}\!\!\left(\Lambda_f^{(b)}\right)=\frac{\Lambda_f^{(b)}}{e^{\Lambda_f^{(b)} T_c^{(b)}}-1}$.
\end{itemize}
\changes{where the last  expression  can be obtained through standard renewal arguments (see~\cite{choungmo14}).}


Let $X^{(b)}_f(t)$ be the process indicating whether content $f$ is in the cache $b$ at time $t$. For single-stage caching policies, such as  \fifo, \random, \lru{} and \qlru,  $X^{(b)}_f(t)$ is an ON/OFF renewal process with ON period distributed as \changes{$T_{S,f}^{(b)}$} and OFF period distributed exponentially with mean value $1/\alpha_f^{(b)}$, where $\alpha_f^{(b)}=\Lambda_f^{(b)}$ for \fifo{} and \random{},  and $\alpha_f^{(b)}=q \Lambda_f^{(b)}$ for  \qlru. The process $X^{(b)}_f(t)$ can also be considered as the busy server indicator of an $M/G/1/0$ queue with service time distributed as $T_{S,f}^{(b)}$.\footnote{ Under CTA a cache with capacity $C$ becomes indeed equivalent to a set of $C$ parallel independent $M/G/1/0$ queues, one for each content.
}  This observation is important because the stationary distribution of $M/G/n/0$ queues  depends on the service time only through its mean~\cite{wolff1989stochastic}. As a consequence, for any metric depending only on the stationary distribution, an $M/G/1/0$ queue is equivalent to an $M/M/1/0$ queue with service time exponentially distributed with the same service rate $\nu_f^{(b)}$. In particular,  the stationary occupancy probability $h_f^{(b)}=\Pr\{X^{(b)}_f(t)=1\}$ is simply $h_f^{(b)}=\frac{\alpha_f^{(b)}/\nu_f^{(b)}}{\alpha_f^{(b)}/\nu_f^{(b)}+1}$.
Under CTA, the characteristic time $T_c$ can then be obtained by imposing that 
\begin{equation}
\label{e:constraint}
\sum_{f=1}^F h_f^{(b)} =C,
\end{equation}
for example using the bisection method.

The possibility of representing a cache in isolation with  an $M/M/1/0$ queue, i.e., as a simple continuous time Markov chain, does not provide particular advantages in this simple scenario, but it allows us to accurately study the more complex case when cells overlap,
 and users' request may be served by multiple caches.

\begin{figure}[htbp]
   \centering
   \hspace{-2em}
   \includegraphics[width=0.4\linewidth]{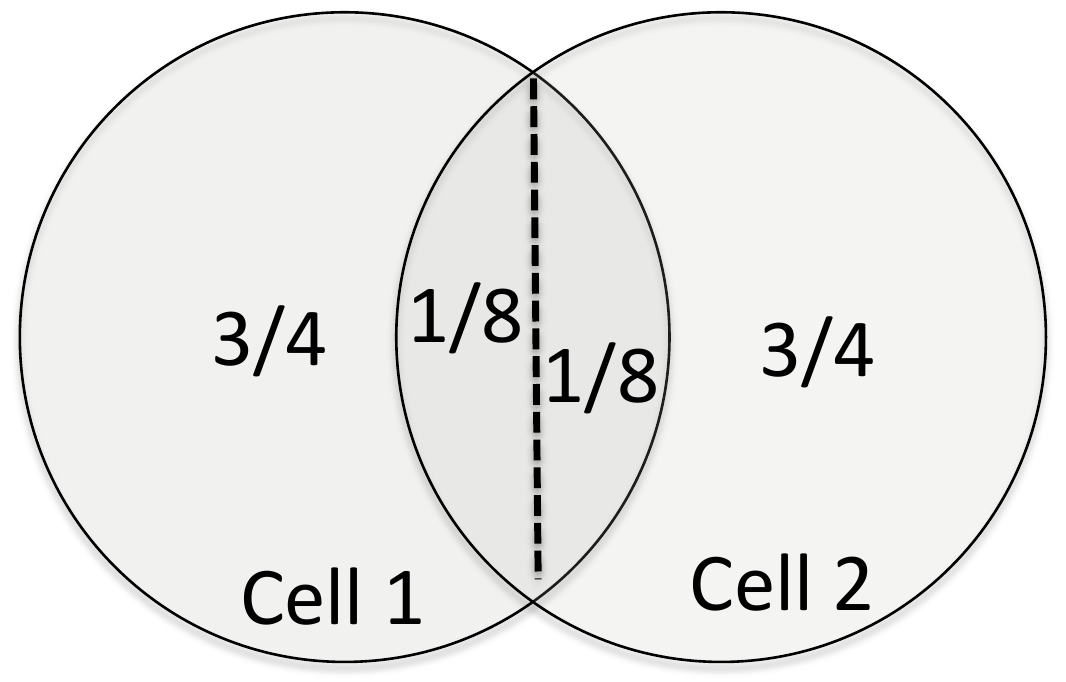}
   \caption{Two overlapping cells each with unit surface. The area of the overlapping area is $1/4$.} 
   \label{f:two_cells}
\end{figure}

\subsection{Overlapping cells}\label{overlapping-cells}
We consider now the case when $B$ cells may overlap.  
Let $X^{(b)}_f(t)$ indicate whether the BS $b$ stores at time $t$ a copy of content $f$ and $\X_f(t)=\left(X^{(1)}_f(t), \dots X^{(B)}_f(t)\right)$ be the 
vector showing where the content is placed within the  network. 
In this case the request rate seen by any BS, say it BS $b$, depends on the availability of the content at the neighbouring BSs, i.e. $\Lambda^{(b)}_f=\Lambda^{(b)}_f(\X_f(t))$. For example, with reference to the Fig.~\ref{f:two_cells}, if $\lambda_f=1$, BS~$1$ experiences a request rate for content $f$ equal to i) $1$ if it is the only BS to store the content, ii) $7/8$ if both BSs store the content or none of them does, iii)  $3/4$ if only  BS $2$ stores the content.

Our analysis of this system is based on the following approximation:\footnote{
	For any vector $\x$, we denote by $\x^{(-b)}$ the subvector of $\x$ including all the components but the $b$-th one and we can write $\x$ as $(x^{(b)},\x^{(-b)})$.
}
\begin{itemize}
	\item[A$1$] The stochastic process $\X_f(t)$ is a continuous-time Markov chain.  For each $f$ and $b$ the transition rate $\nu_f^{(b)}$ from state $\X_f(t)=(x_f^{(b)}=1, \x_f^{(-b)})$ to $(x_f^{(b)}=0, \x_f^{(-b)})$ is given by \eqref{e:rate} with $\Lambda_f^{(b)}$ replaced by $\Lambda^{(b)}_f(\X_f(t))$.
\end{itemize}
Before discussing the quality of approximation A$1$, let us first describe how it allows us to study the cache system. For a given initial guess of the characteristic times at all the $B$ caches, we determine the stationary distribution of the Markov Chains (MCs) $\X_f(t)$. 
We then compute the expected buffer occupancy at each cache and check if the set of constraints~\eqref{e:constraint} is satisfied. We then iteratively modify the vector of characteristic times by reducing (/increasing) the value for those caches where the expected buffer occupancy is above (/below) $C$.  Once the iterative procedure on vector  of characteristic times has reached convergence,
we compute the hit ratios for each content at each cache.

A$1$ envisages to replace the original stochastic process, whose analysis appears prohibitive, with a (simpler) MC. This has no impact 
on  any system metric that depends only on the stationary distribution in the following cases:
\begin{enumerate}
	\item isolated caches (as we have shown in Sec.~\ref{s:single_cell}),
	\item caches using \random{} policy, because the corresponding sojourn times coincide with the characteristic times and are exponentially distributed, hence A$1$ is not an approximation,	\item caches using \fifo{} policy under the additional condition in Proposition~\ref{p:insensitivity} below.
\end{enumerate}
In all these cases CTA is  the only approximation  having an impact on  the accuracy of model  results. In the most general case, however, A$1$ introduces an additional approximation. However our numerical evaluation shows that our approach provides very accurate results in all the scenarios we tested. 

We end this section by detailing the insensitivity result for a system of \fifo{} caches.
\begin{prop}
\label{p:insensitivity}
For \fifo{}, the probability of being in state $\x_f$ is insensitive to the distribution of the sojourn times $T_{S,f}^{(b)}$ as far as the Markov chain $\X_f(t)$ in approximation~A1 is reversible.
\end{prop}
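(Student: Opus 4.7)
The plan is to view the process $\X_f(t)$ under \fifo{} as a generalized semi-Markov process (GSMP) and apply the classical insensitivity theorem for reversible GSMPs (Matthes--König; Barbour; Schassberger; see also Kelly, \emph{Reversibility and Stochastic Networks}). This theorem states that the stationary distribution of such a process depends on the holding-time distributions only through their means, provided that the \emph{exponentialised} version of the GSMP---the CTMC obtained by replacing each general holding time with an exponential of the same mean---is reversible. Since the hypothesis of the proposition is precisely that this exponentialised version (the chain of A1) is reversible, the conclusion will follow once the GSMP structure of the \fifo{} dynamics is verified.

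Concretely, I would describe $\X_f(t)$ as a GSMP on the state space $\{0,1\}^B$ in which each cache $b$ carries two mutually exclusive event types for content $f$: an \emph{insertion} event, active when $X_f^{(b)}=0$, whose clock is exponential with state-dependent rate $\alpha_f^{(b)}=\Lambda_f^{(b)}(\X_f)$; and an \emph{eviction} event, active when $X_f^{(b)}=1$, whose clock has mean $1/\nu_f^{(b)}=T_c^{(b)}$ and general distribution $T_{S,f}^{(b)}$. Three structural properties then have to be checked: (i) the eviction clock at cache $b$ is set once upon insertion and is never renewed or interrupted by subsequent hits, since \fifo{} does not refresh the timer on a hit; (ii) the eviction clock at $b$ is also unaffected by transitions at other caches $b'\neq b$, because the local evolution of $X_f^{(b)}$ is driven solely by the local timer and the local (thinned) request stream; (iii) the insertion events remain memoryless when their rates change as $\x_f^{(-b)}$ evolves, which follows from the Poisson nature of the underlying request process together with standard thinning arguments.

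Once (i)--(iii) are in place, the insensitivity theorem applies and immediately yields that the stationary law of $\X_f(t)$ coincides with that of the CTMC of A1, and hence depends on the distribution of $T_{S,f}^{(b)}$ only through its mean $1/\nu_f^{(b)}$. The main obstacle is less computational than conceptual: one must cast the \fifo{} dynamics rigorously in the GSMP framework and rule out any hidden coupling of the eviction clock at cache $b$ with events at neighbouring caches (whose rates do depend on $\X_f$ through $\Lambda_f^{(b)}$). The reversibility assumption then discharges the remainder of the argument, since it is precisely the algebraic condition under which the product-form, insensitive stationary distribution exists for this class of state-dependent loss systems.
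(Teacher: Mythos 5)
Your proposal is correct and follows essentially the same route as the paper: both cast $\X_f(t)$ together with the residual eviction timers as a Generalized Semi-Markov Process and invoke the Matthes-type insensitivity criterion, observing that in this system the required partial balance conditions reduce exactly to reversibility of the exponentialised chain of approximation~A1. Your additional verification that the eviction clock at cache $b$ is never renewed on a hit nor interrupted by transitions at other caches is implicit in the paper's setup but is the same underlying argument.
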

 The proof of proposition~\ref{p:insensitivity} is in \citeapp{a:fifo_insensitivity} and relies on some insensitivity results for Generalized Semi Markov Processes. The reversibility hypothesis is for example satisfied for the cell trefoil topology considered in Sec.~\ref{s:validation} when users' density is constant.

\subsection{Model complexity}
Note that, in general,  the number of states of the Markov Chain describing the dynamics of  $\X_f(t)$   grows exponentially with  the number of cells $B$ (actually, it is equal to $2^{B}$), therefore modeling scenarios with a large number of cells becomes challenging and requires the adoption of efficient approximate techniques for the solution of probabilistic graphical methods~\cite{pelizzola}. 
However, scenarios with up to 10-12 cells can be efficiently modeled.  Furthermore, when the geometry exhibits some symmetry, some state aggregation becomes possible. For example, in the cell trefoil topology presented in the next section,  the evolution of $\X_f(t)$ can be represented by a  reversible birth-and-death Markov Chain with $(B+1)$ states ($\ll 2^{B}$). 

\subsection{Different Update rules}
\label{s:update_rules}
In presenting the model above, we have referred to the simple \blind{} update rule. 
Our modeling framework, however, can easily  accommodate  other update rules. For example for \one, if the reference BS is the closest one,   we should set \changes{$\Lambda^{(b)}_f=\lambda_f \mu\left(\{\s \in R; d(\s,S_b) \le d(\s,S_{b'}), \forall b'\}\right)$}, where $d(\s,A)$ denotes the distance between the point $\s$ and the set $A$. 
On the contrary, for \all, any request that could be served by the base station is taken into account, i.e. $\Lambda^{(b)}_f=\lambda_f \mu\left(S_b\right)$.
Finally, for  \lazy{} we have:
\begin{equation}
\label{e:lazy_rate}
\Lambda_f^{(b)}(\X_f)=\lambda_f \mu\!\!\left(S_b \setminus \bigcup_{b' | X_f^{(b')}=1} S_{b'}\right),
\end{equation}
i.e. only requests coming from areas that cannot be served from any other cache, affect the cache state. For example, with reference to Fig.~\ref{f:two_cells}, assuming $\lambda_f=1$, the request rate that contributes to update cache $1$ status is  $3/4$ when content $f$ is stored also at cache $2$. 

As we are going to discuss in Sec.~\ref{s:lazyqlru}, the update rules have a significant impact on the performance and in particular the lazy policies often outperform the others. Because our analysis will rely on the model described in this section, we first present in Sec.~\ref{s:validation} some validation results to convince the reader of its accuracy.

\subsection{Extension to multistage caching policies: \Klru}
The previous model can be extended to \twolru\ (and \Klru)  by following the approach proposed in~\cite{garetto16}.
In particular dynamics of the two stages can be represented by two  separate continuous time MCs whose states $\X_f^{(1)}(t)$ and $\X_f^{(2)}(t)$ correspond to the  configuration of content~$f$  at time $t$ in the system of virtual caches and  physical caches, respectively.
The dynamics of the system of virtual caches   at the first stage $\X_f^{(1)}(t)$ are not impacted by the presence of the second  stage and  perfectly emulate  
the dynamics of \lru\ caches;  therefore we  model them  by using the same MC as for  \lru .
On the contrary,  dynamics at the second stage depend on the first stage state. In particular content $f$ is inserted in the physical cache at the second stage upon a miss, only if the incoming request finds the content metadata within the first stage cache.
Therefore the transition rate from state $\X^{(2)}_f(t)=(x_f^{(b,2)}=0, \x_f^{(-b,2)})$ to $(x_f^{(b,2)}=1, \x_f^{(-b,2)})$ is given by 
{$\lambda_f^{(b,2)}= \Lambda^{(b)}_f(\X_f^{(2)}(t)) h_f^{(b,1)}$}, where $h_f^{(b,1)}$ represents the probability that  content $f$ metadata
is stored at the first stage.
Along the same lines the model can be easily extended to \Klru\ for $k>2$.

\subsection{How to account for temporal locality}
\label{s:temporal_locality}
Following the approach proposed in \cite{traverso13,garetto15}, we model the request process of every content~$f$ as a Markov Modulated Poisson Process (MMPP),  whose  modulating MC is a simple  ON-OFF MC.  
Now focusing, first, on a single cell scenario, we denote by  $\Lambda_f^{(b)}$ the aggregate arrival rate  of content $f$ 
at BS $b$ during ON periods. The arrival rate of content $f$ is, instead, null  during OFF periods. Let
$T^{\text{ON}}_f$ and $T^{\text{OFF}}_f$ denote the average sojourn times in state ON and OFF, respectively.\footnote{\changes{Sojourn times in both  states are exponentially distributed.}}
The idea behind this model is that each content has a finite lifetime  with mean $T^{\text{ON}}_f$ and after a random time with mean $T^{\text{OFF}}_f$, a new content with the same popularity arrives in the system. For convenience this new content is denoted by the same label $f$ (see~\cite{traverso13,garetto15} for a deeper discussion).
We can model the dynamics of content $f$ in the cache 
as an MMPP/M/1/0 queue with state-dependent service rate.
In particular service rates upon ON 
($\nu_f^{(b,ON)}$) are computed according to \eqref{e:rate}.  Service rates  on state OFF are simply set to 
$\nu_f^{(b,OFF)}=\frac{1}{T_c^{(b)}}$, as result of the application of   \eqref{e:rate} when the arrival rate of content-$f$ requests 
tends to 0.

The extension to the case of multiple overlapping cells can be carried out along the same lines of Section \ref{overlapping-cells}, 
(i.e. by applying approximation A$1$). As in \cite{garetto15},  the ON-OFF  processes governing  content-$f$
request rate at different cells are assumed to be 
perfectly synchronized (i.e., a unique underlying ON-OFF Markov Chain  determines content-$f$ request rate at every cell).
The resulting stochastic process $\X_f(t)$ is a continuous-time Markov Chain with $2^{B+1}$ states.

\section{Model validation}
\label{s:validation}

\begin{figure}[tbp]
	\centering
         \subfloat[]{\includegraphics[width=0.35\linewidth]{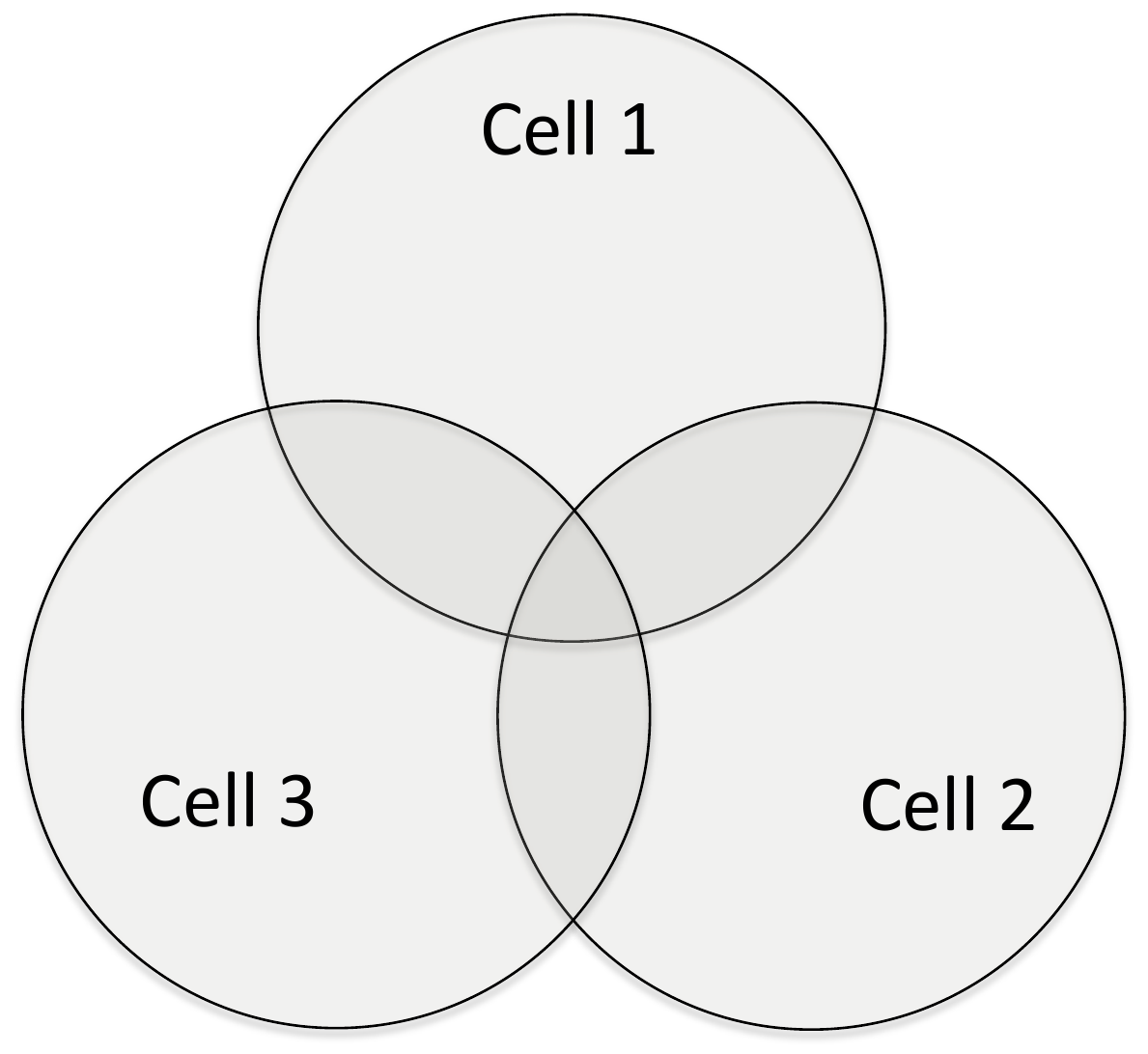}} \hspace{1cm}
         \subfloat[]{\includegraphics[width=0.35\linewidth]{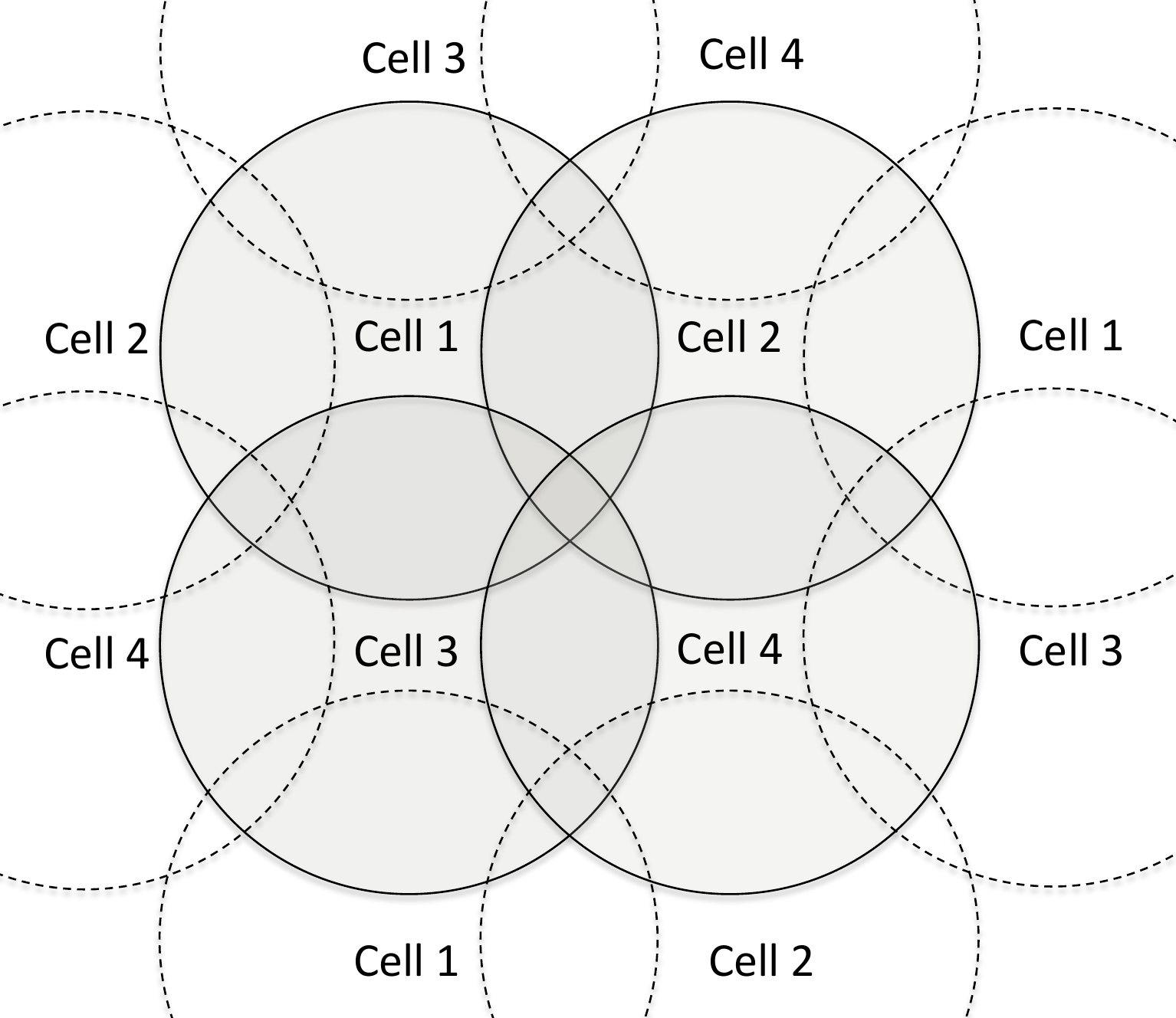}}\\    
         \caption{(a):  a trefoil. \hspace{18 mm} (b): a two-by-two cell torus.}
        \label{f:flower_torus}
        \vspace{-2 mm}
\end{figure}

\begin{figure*}[htbp]
         \centering
         \subfloat[One]{\includegraphics[width=0.33\linewidth]{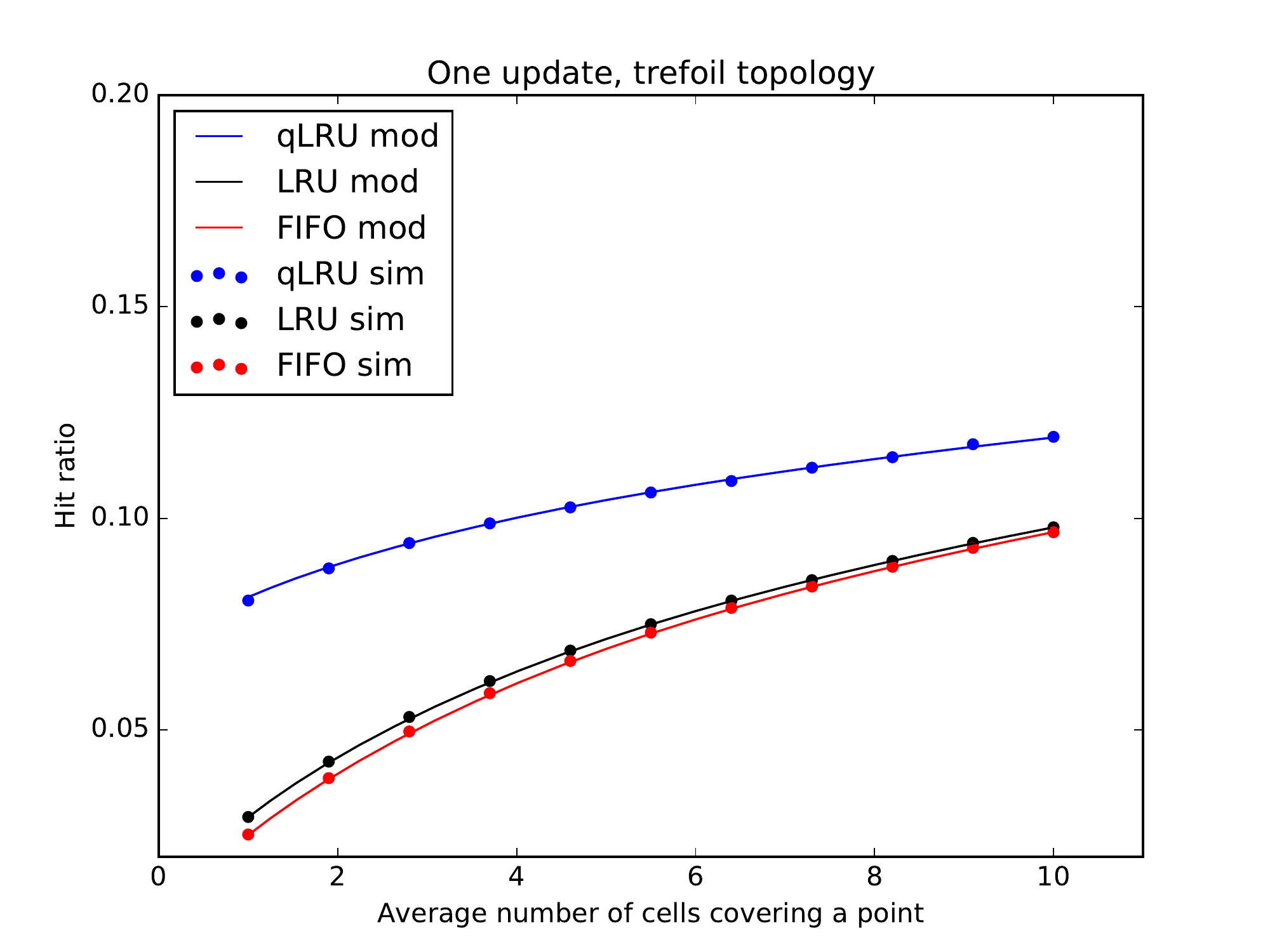}}\hfill
          \subfloat[Blind]{\includegraphics[width=0.33\linewidth]{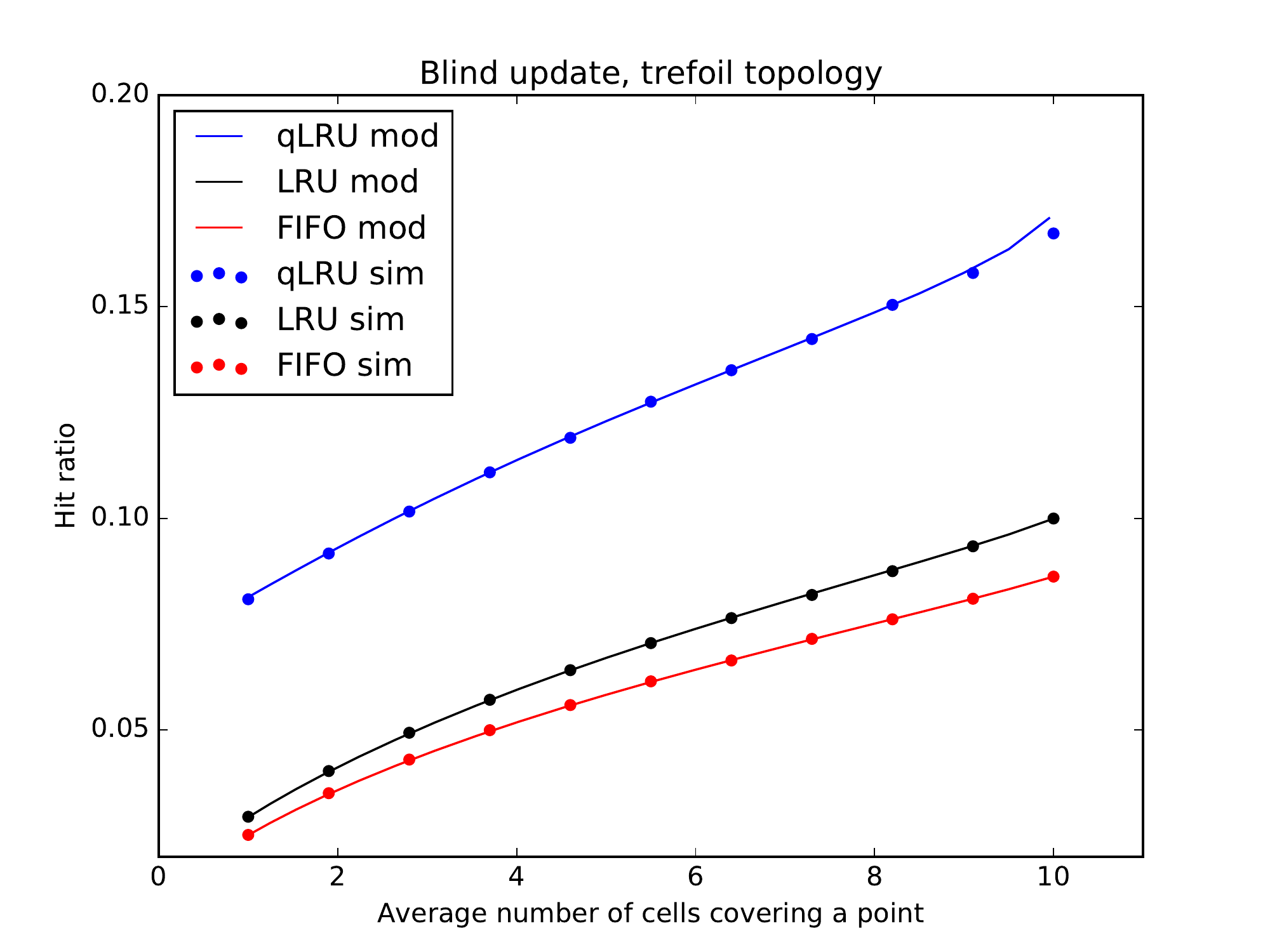}}\hfill
         \subfloat[Lazy]{\includegraphics[width=0.33\linewidth]{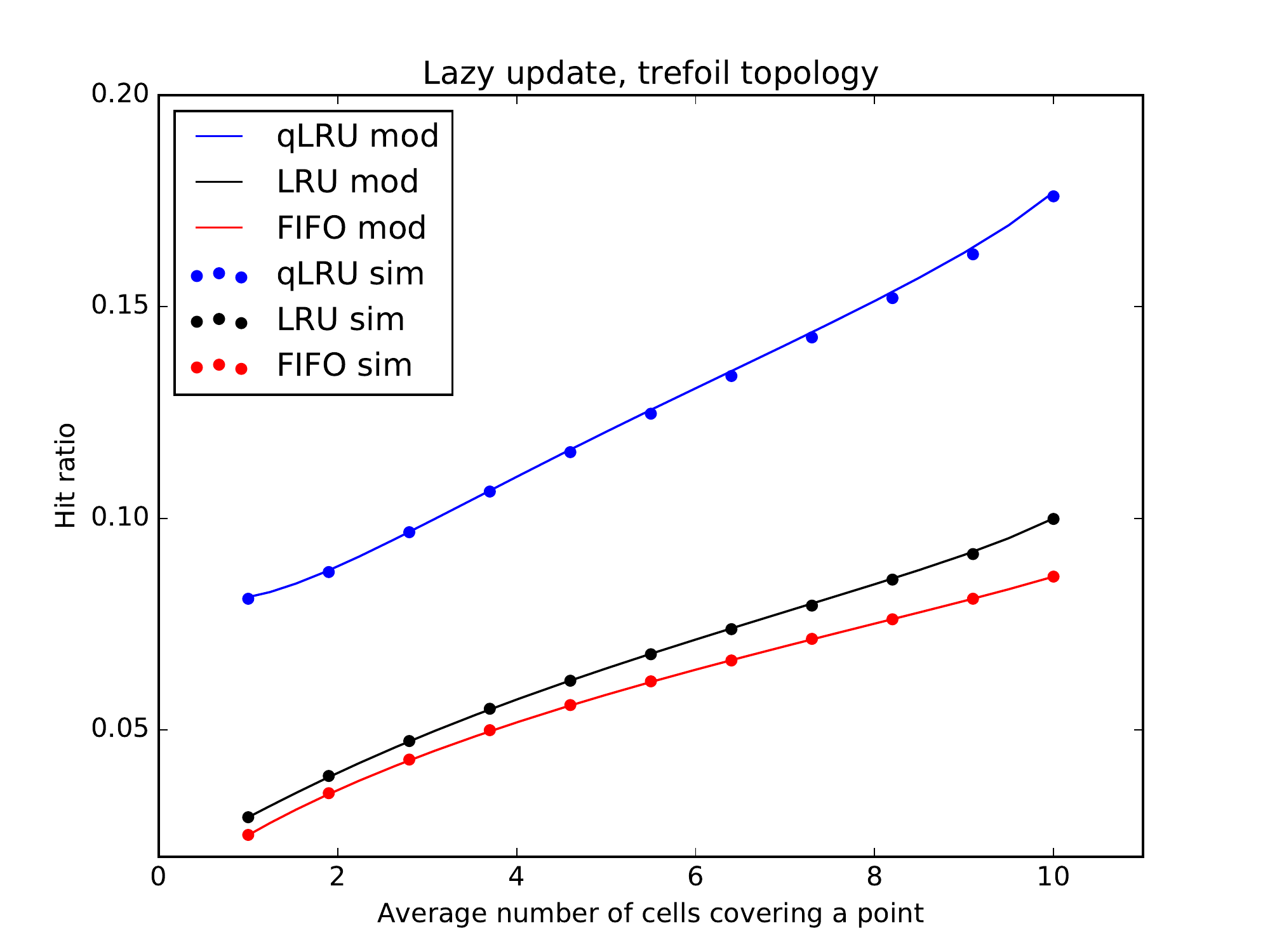}}\\
         \caption{\changes{Comparison between model predictions and simulations;  trefoil topology with $10$ cells; $C=100$; IRM traffic model with $\alpha=0.8$; \qlru{} employs $q=0.01$.}}
         \label{f:flower_val}
       \end{figure*}

In this section we validate our model by comparing its prediction against  simulation results for two different topologies.
Our trace-driven simulator developed in Python reproduces the exact dynamics of the caching system, and therefore can be used to test the impact of model assumptions (CTA and A1) on the accuracy of results in different traffic scenarios.
\changes{We start introducing a topology  which  exhibits a complete cell
symmetry (i.e. the hit rate of any allocation is invariant under cell-label permutations). 
In such a case, $\X_f(t)$  turns out to be a reversible Markov Chain.
Fig.~\ref{f:flower_torus}~(a) shows an example for $B=3$.  Generalizations for $B>3$ can be defined in a $B$ dimensional Euclidean-space by considering $B$ hyperspheres centered at the vertices of a regular simplex, but also on the plane if users' density is not homogeneous. We refer to this topology as the \emph{trefoil}.}
Then  we consider  a \emph{torus topology}  in which the base stations are disposed according to a regular grid on a torus as in Fig.~\ref{f:flower_torus}~(b).
For simplicity, in what follows, we assume that all the cells have the same size and a circular shape. 

Users are uniformly distributed over the plane and they request contents from a catalogue of $F=10^6$ contents whose popularity is distributed according to a Zipf's law with exponent $s=0.8$. Each BS can store up to $C=100$ contents. We have also performed experiments with $C=1000$ and $s=0.7$, but the conclusions are the same, so we omit them due to space constraints.


In Fig.~\ref{f:flower_val} we show the global hit ratio for different values of cell overlap in a trefoil topology with $10$ cells. The overlap is expressed in terms of  \changes{the expected number of caches a random user could download the content from}. The subfigure (a) shows the corresponding curves for \changes{ \fifoone{} and }\qlruone{} with $q=0.01$ and with $q=1$, which coincides with \lruone. The other subfigures are relative to the update rules \blind{} and \lazy.\footnote{
	\changes{We do not show results for \random{} or the update rule \all. \random{} is practically indistinguishable from \fifo{}  and \all{} was shown to have worse performance than \one{} for IRM traffic already in~\cite{giovanidis16}.}
} \changes{\fifoblind{} and \fifolazy{} coincide because in any case \fifo{} does not update the cache status upon a hit.}
The curves show an almost perfect matching between the results of the model described in Sec.~\ref{s:model} and those of simulation. 
Figure~\ref{f:torus_val} confirms the accuracy of the model also for the \changes{torus topology with 9 cells.}
Every model point has requested  less than 3 seconds of CPU-time on a   
INTEL Pentium G3420 @3.2Ghz for the cell trefoil topology, and  less than $5$ minutes for the torus.

\begin{figure*}[htbp]
         \centering
         \subfloat[One]{\includegraphics[width=0.33\linewidth]{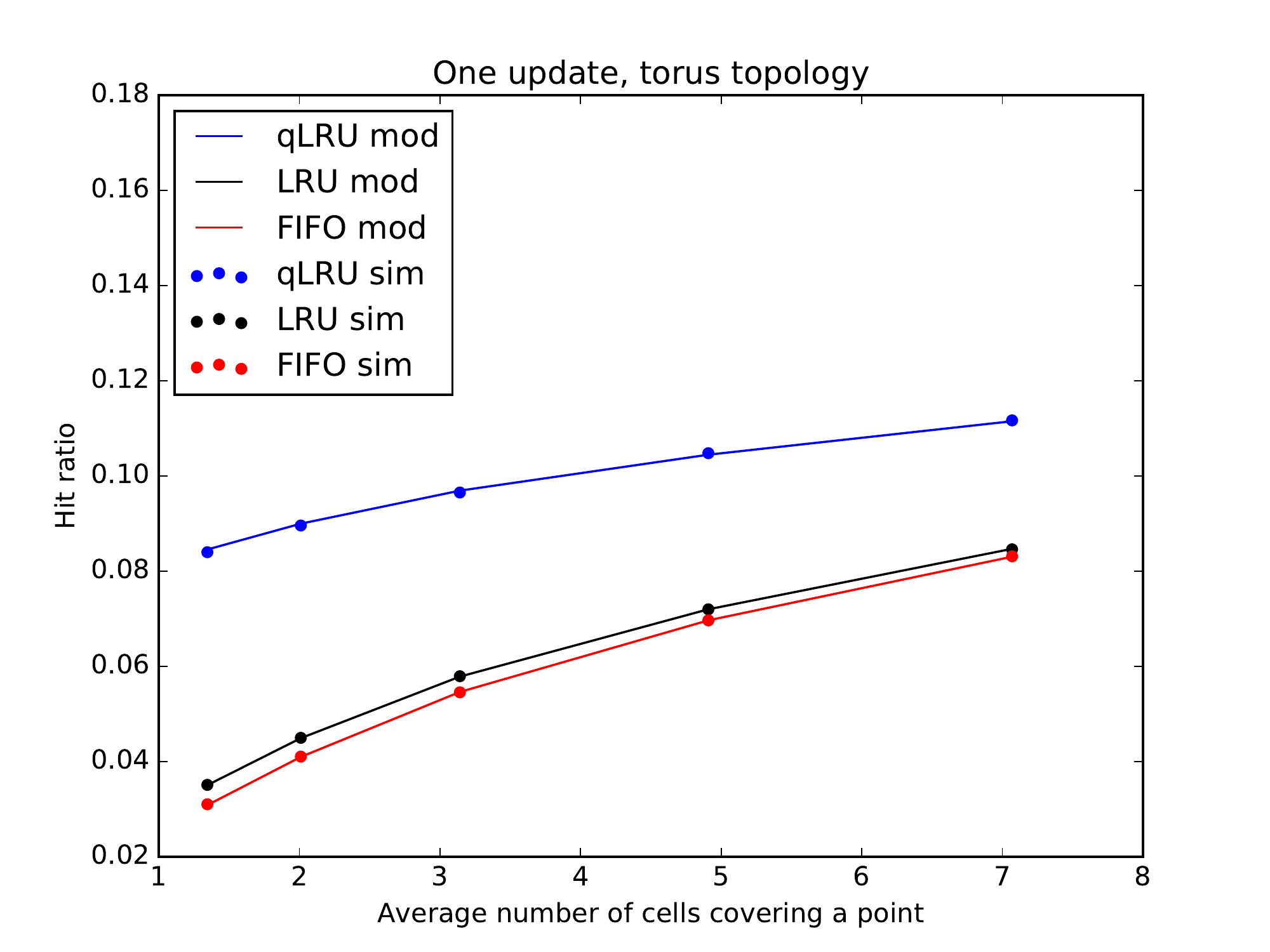}}\hfill
          \subfloat[Blind]{\includegraphics[width=0.33\linewidth]{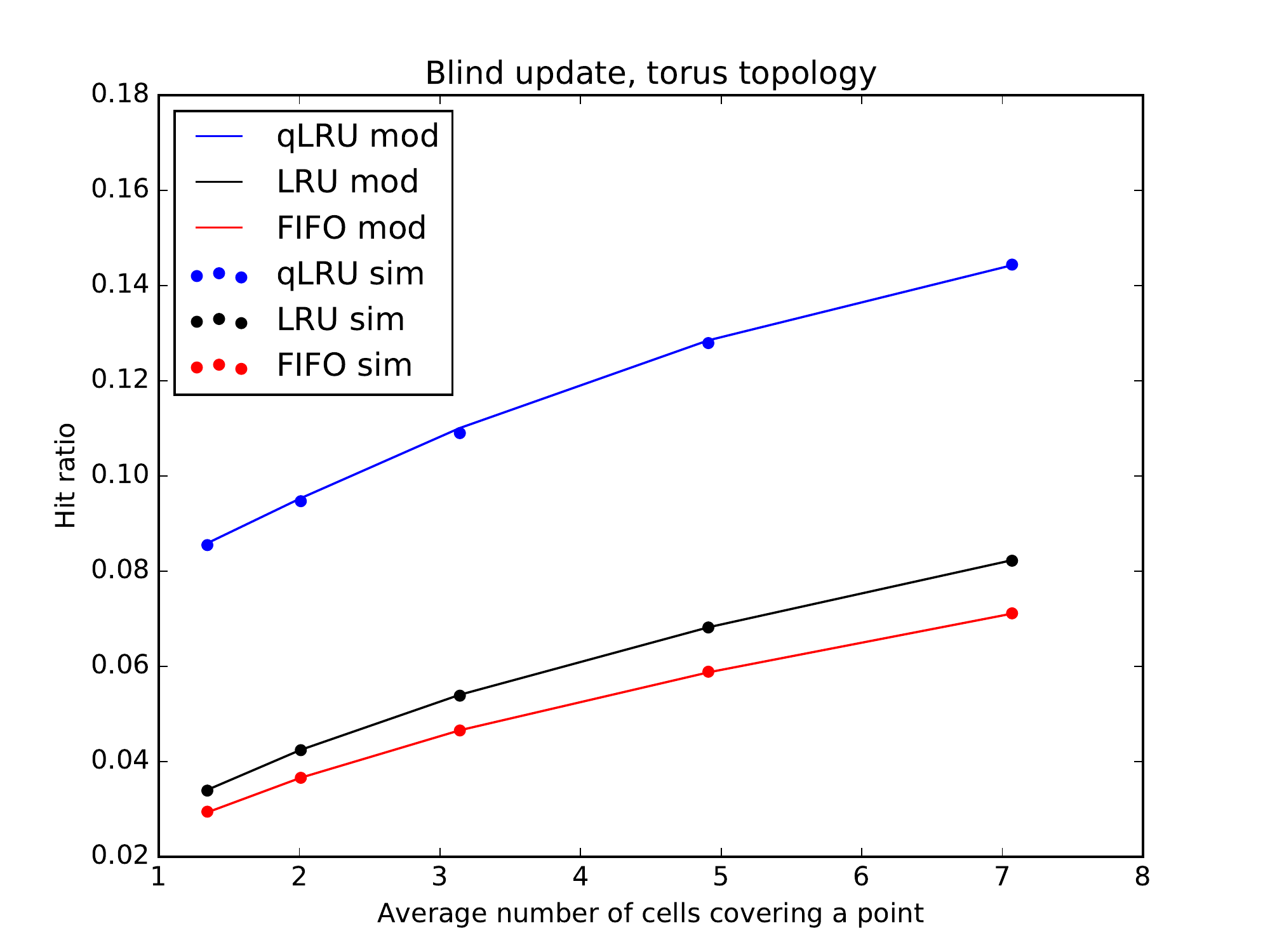}}\hfill
         \subfloat[Lazy]{\includegraphics[width=0.33\linewidth]{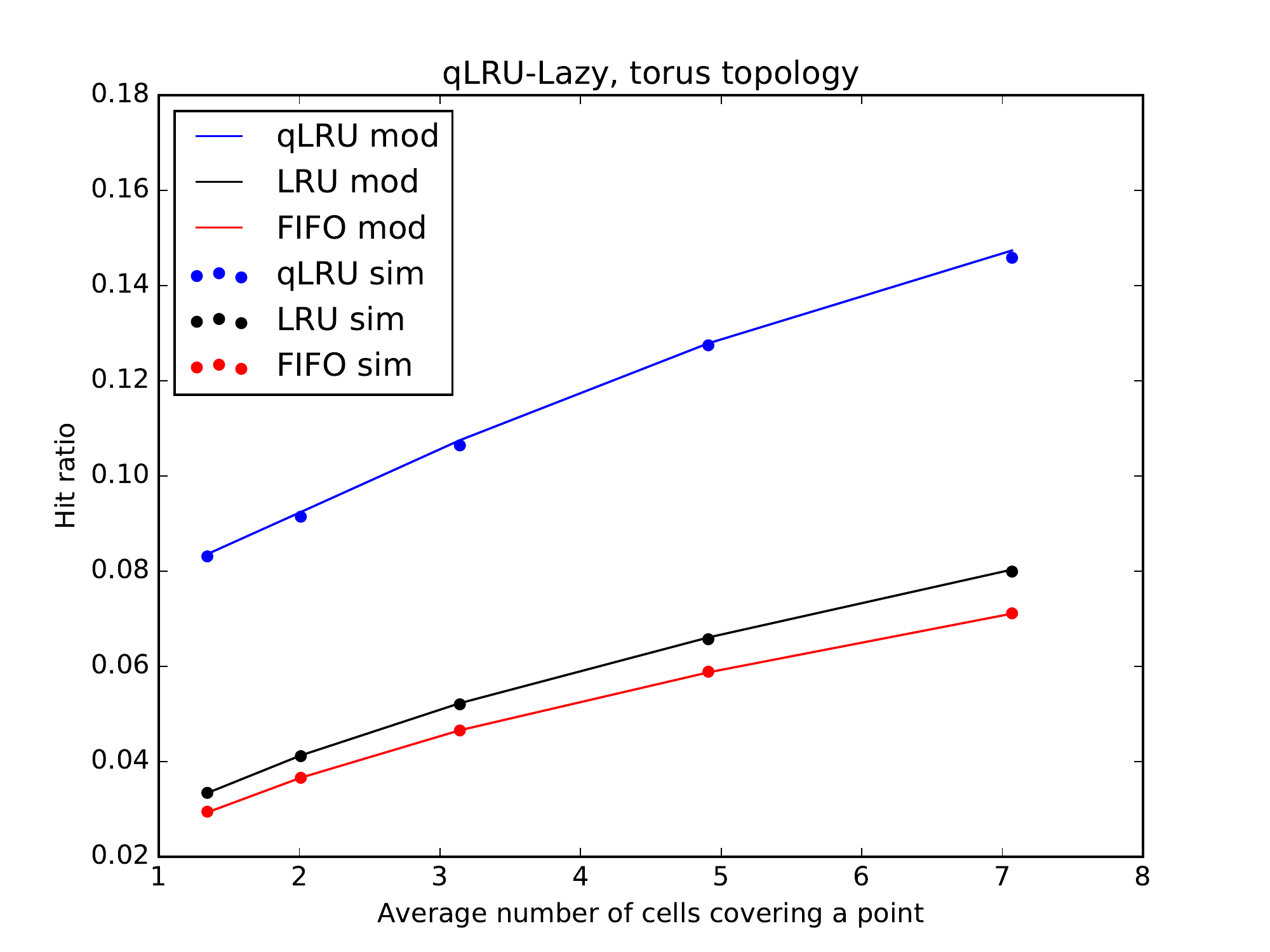}}\\
         \caption{\changes{ Comparison between model predictions and simulations; torus topology \changes{with 9 cells}; $C=100$; IRM traffic model with $\alpha=0.8$; \qlru{} employs $q=0.01$.}}
         \label{f:torus_val}
       \end{figure*}


%
%

\begin{figure}[htbp]
   \centering
   \includegraphics[width=\myFigureScale\linewidth]{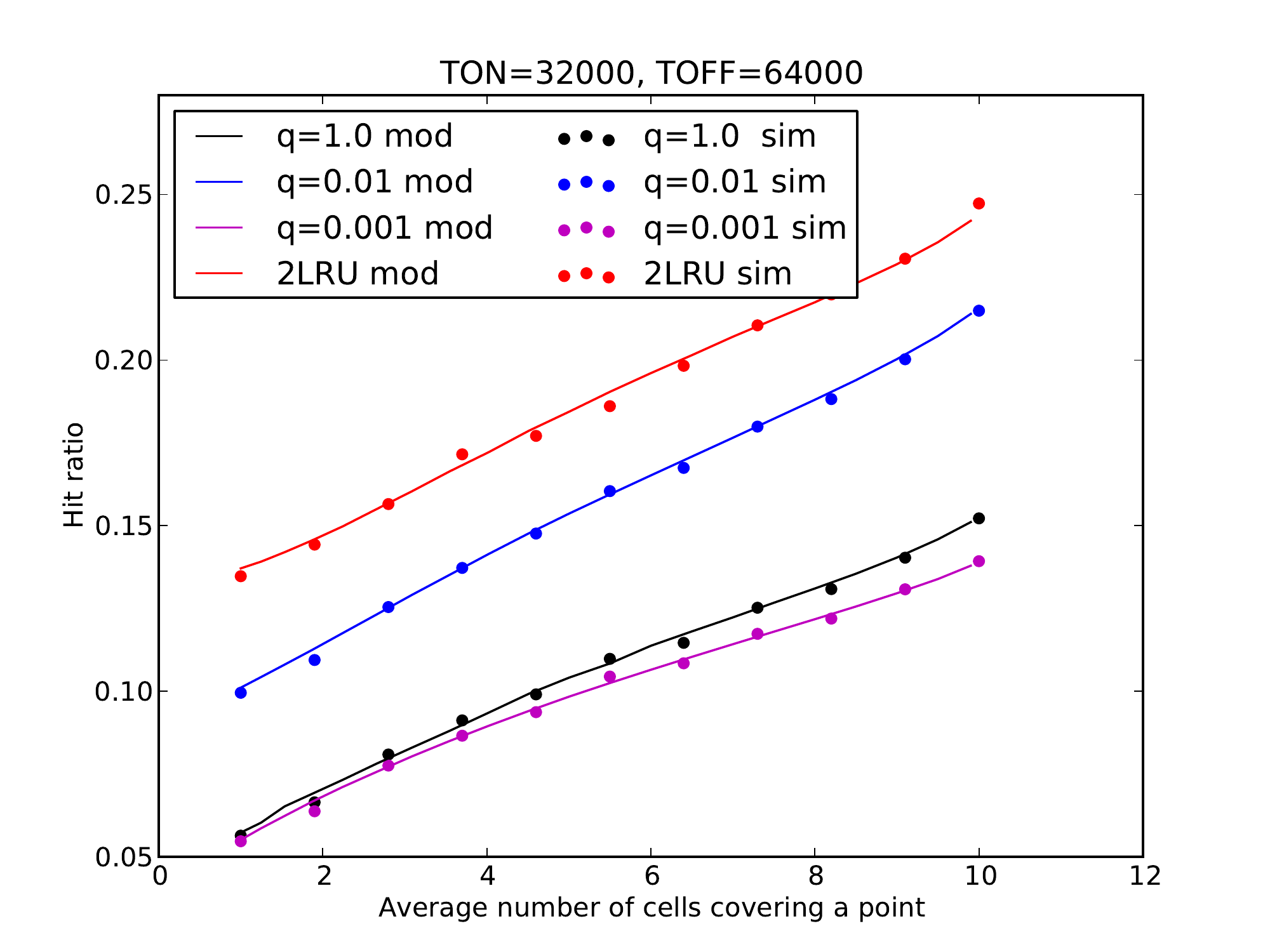}
   \caption{Trefoil: results for the \emph{lazy} update rule and an ON-OFF request process with $\E[T_f^{ON}]=3.2*10^4$ and $\E[T_f^{OFF}]=6.4*10^4$. The cell request rate for the most popular content is  $\Lambda=1.3$.} 
   \label{f:ton16000}
\end{figure}

Finally, Fig.~\ref{f:ton16000} shows that the model is also accurate when the request process differs from IRM. The curves have been obtained for a trefoil topology under the \emph{lazy} update rule and the ON-OFF traffic model described and studied in Sec.~\ref{s:temporal_locality}. 
In the figure we also show some results for \twolrulazy. As \qlru, upon a miss, \twolru{} prefilters the contents to be stored in the cache. \qlru{} does it probabilistically, while \twolru{} exploits another \lru{} cache for the metadata. Under IRM, their performance are qualitatively similar, but \twolru{} is known to be more reactive and then 
better performing when the request process exhibits significant temporal locality~\cite{garetto16}. Our results in Fig.~\ref{f:ton16000} confirm this finding. In particular, as $q$ decreases, the performance of \qlru{} first improves (compare $q=0.01$ with $q=1$) because \qlru's probabilistic admission rule filters the unpopular content, and then worsens (see the curve for $q=0.001$) when \qlru{} dynamics' timescale becomes comparable to $T_{ON}$.

%
%


\section{The Lazy Update Rule}
\label{s:lazyqlru}

Even if  the focus of the previous section has mainly been on the validation of our model, the reader may  have observed by  
looking at Figures~\ref{f:flower_val} and~\ref{f:torus_val} that the update rule \lazy{} 
performs significantly better than \one{} and
(to a lesser extent)  \blind{}, especially when the cellular network is particularly dense. This improvement comes at the cost of a minimal communication overhead: an additional bit to be piggybacked into  every user's request to indicate whether  the content is available at some of the other cells  covering the user. In this section we use our model to further investigate the performance of the \lazy{} update rule.

\begin{figure}[htbp]
   \centering
   \includegraphics[width=\myFigureScale\linewidth]{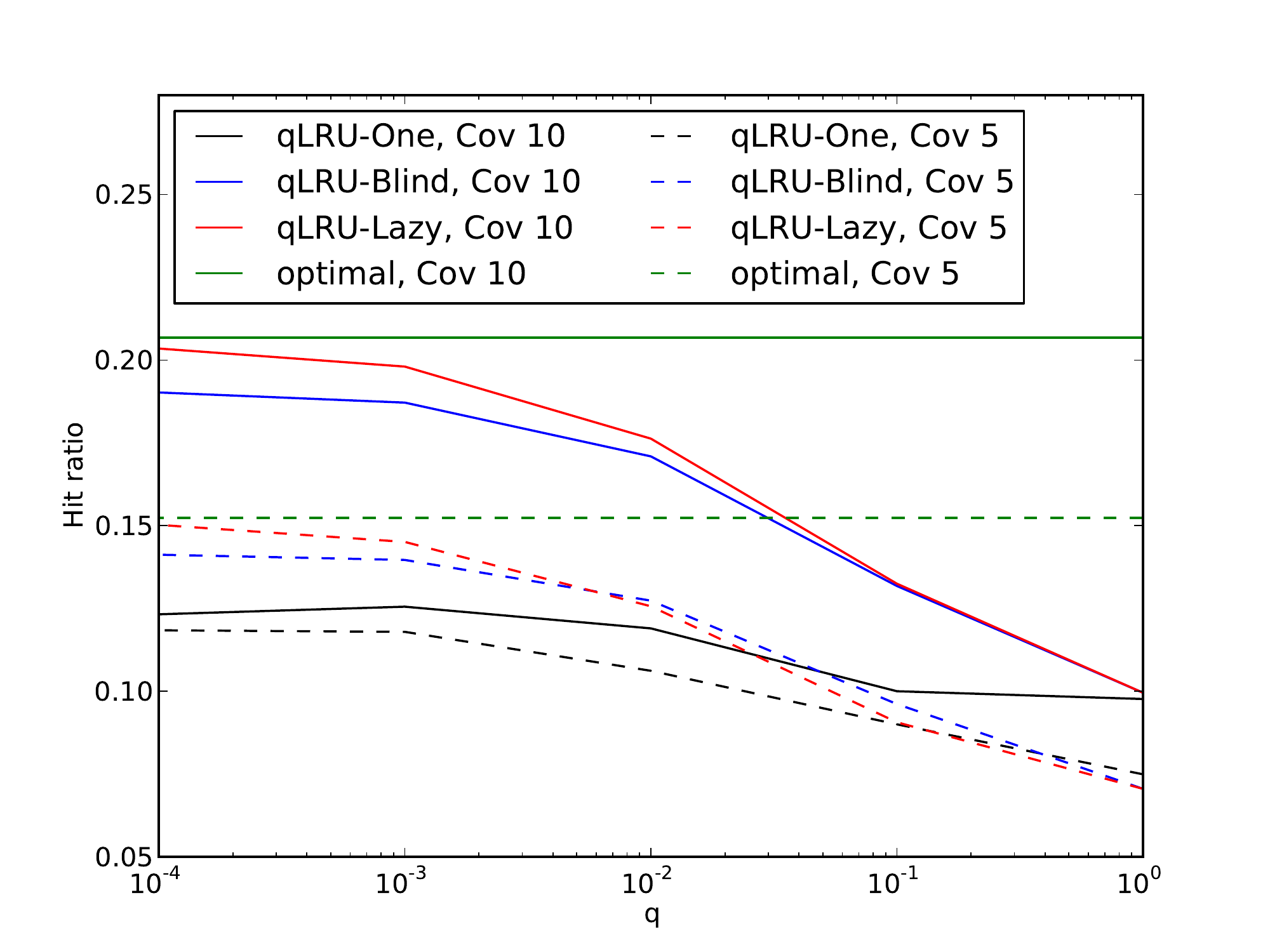}
   \caption{Performance of \qlru{} coupled with different update rules in a trefoil network (model results).} 
   \label{f:qlru_comparison}
\vspace{-5 mm}
   \end{figure}

First, we present  in Fig.~\ref{f:qlru_comparison} some results for \qlru{} coupled with the different update rules. The curves show the hit ratio versus the parameter $q$ achieved by the different policies.
The topology is a trefoil with 10 cells. Results are reported for two values of cell overlap, corresponding  to the cases where a user is  covered on average by $5$ and $10$ BSs. As a reference, also the optimal achievable hit ratio is shown by the two horizontal green lines \changes{(in this particular scenario, the optimal allocation can be obtained by applying the greedy algorithm described below in Sec.~\ref{s:greedy})}. \qlrulazy{} significantly outperforms \qlruone{} for  small values of $q$, with relative gain that can be as high as $25$\% for the $5$-coverage and $65$\% for the $10$-coverage. The improvement with respect  to \blind{} is smaller, but what is remarkable is that \qlrulazy{} appears to be able to asymptotically approach  the performance of the optimal allocation. In the following we will prove that i) this is indeed the case for the trefoil topology and ii) \qlrulazy{} achieves a locally optimal allocation in a general scenario. For a single cache, it has already been proven that \qlru{} asymptotically maximizes the hit ratio when $q$ converges to $0$ (see~\cite{garetto16} for the case of uniform content size contents and~\cite{neglia16itc} for the case of heterogeneous size), but, to the best of our knowledge,  no optimality results are available for a multi-cache scenario as the one we are considering.
Before proving optimality, we discuss what is the optimal allocation and we provide some intuitive explanation about \lazy{} good \changes{performance}.

%
%
%

\subsection{Optimal content allocation and a new point of view on \lazy}
\label{s:greedy}
If content popularities are known and stationary, one can allocate, once and for all, contents to caches in order to maximize the global hit ratio. 
Formally, the following integer maximization problem can be defined: 
\begin{align}
\label{e:static_opt}
& \text{maximize}
& & \sum_{f=1}^F  \lambda_f \mu\Bigg(\bigcup_{\begin{subarray}{c} b |   x_f^{(b)}=1 \end{subarray}} S_b\Bigg) \\ 
& \text{subject to}
& & \sum_{f=1}^F x_f^{(b)} =C  \;\;\; \forall b = 1, \ldots B,\nonumber\\ 
& & & x_f^{(b)} \in \{0,1\} \;\;\; \forall f =1, \ldots F, \;\;\; \forall b = 1, \ldots B.\nonumber 
\end{align}

%
Carrying on an analysis similar to that in  \cite{shanmugam13}, it is possible to show that this problem i) is NP-hard (e.g. through a reduction to the 2-Disjoint Set Cover Problem), ii) can be formulated as the maximization of a monotone sub-modular set function with matroid constraints. It follows that the associated greedy algorithm provides a $1/2$-approximation for problem~\eqref{e:static_opt}.

Let us consider how the greedy algorithm operates. Let $\X(l-1)\in\{0,1\}^{B\times F}$ describe the allocation at the \mbox{$(l-1)$-th} step of the greedy algorithm, i.e.~the matrix element \mbox{$(\X(l-1))_{f,b}=x_f^{(b)}(l-1)$} indicates if at step $l-1$ the algorithm places content $f$ at cache $b$. At step $l$, the greedy algorithm  computes for each content $f$ and each cache $b$ the marginal improvement for the global hit ratio to store a copy of $f$ at cache $b$, given the current allocation $\X(l-1)$, that is
\begin{equation}
\label{e:marginal}
\lambda_f \mu\Bigg(S_b \setminus \bigcup_{b' | x_f^{(b')}(l-1)=1} S_{b'}\Bigg)
\end{equation}
The pair $(f_l , b_l )$ leading to the largest hit ratio increase is then selected and the allocation is updated by setting $x^{(b_l)}_{f_l} = 1$.
The procedure is iterated until all the caches are full.

We observe that \eqref{e:marginal} is exactly the request rate  that drives the dynamics of \qlrulazy{} in state $\X_f(l-1)$,
as indicated in~\eqref{e:lazy_rate}. 
\changes{Upon a miss for content $f$, \qlrulazy{} inserts it with a probability that is proportional to the marginal increase of the global hit ratio provided by adding the additional copy of content $f$. This introduces a stochastic drift toward local maxima of the hit ratio.}
As we said above,  when $q$ vanishes, it is known that an isolated \qlru{} cache tends to  store deterministically 
the top popular contents, then one can expect  each \qlrulazy{} cache to store the contents with the largest marginal request rate given the current allocation at the other caches.
Therefore, it seems licit to conjecture that a system of \qlrulazy{} caches  asymptotically converges at least to a local maximum for the hit ratio (the objective function in~\eqref{e:static_opt}). { Section~\ref{s:qlrulazy_convergence_general} shows that this is indeed the case.} Before moving to that result, we show that in particular \qlrulazy{} achieves the maximum hit ratio in a trefoil topology.

\subsection{In a trefoil topology \qlrulazy{} achieves the global maximum hit ratio}
\label{s:qlrulazy_convergence}
Now, we formalize the previous arguments,
showing that as $q$ tends to $0$, \qlrulazy{} content allocation
  converges to an optimal configuration in which the set of contents maximizing the global hit ratio
is stored at the caches. This result holds for the trefoil topology  under our model. 

We recall that that the trefoil topology exhibits  a complete cell symmetry and that the hit ratio of any allocation is invariant under cell label permutations. A consequence is that the hit ratio depends only on the number of copies of each file that are stored in the network, while it does not depend on where they are stored as far as we avoid to place multiple copies of the same file in the same cache, that is obviously unhelpful. It is possible then to describe a possible solution simply as an $F$-dimensional vector $\mathbf k=(k_1,k_2, \dots k_F)$, where $k_f$ denotes the number of copies of content $f$. Under \qlrulazy{} we denote by $\pi(q,\mathbf k)$, the stationary probability that the system is in a state with allocation $\mathbf k$.

The optimality result follows from combining the two following propositions (whose complete proofs are in \citeapp{a:qlrulazy_optimality}):
\begin{prop}
\label{p:greedy_flower}
In a trefoil topology, an allocation of the greedy algorithm for Problem~\eqref{e:static_opt} is optimal. 
\end{prop}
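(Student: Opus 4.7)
The plan is to exploit the complete cell symmetry of the trefoil to collapse Problem~\eqref{e:static_opt} into a separable concave integer program, for which the classical marginal-greedy is optimal, and then to verify that the greedy defined around~\eqref{e:marginal} actually implements that marginal-greedy on the trefoil.

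First I would formalise the reduction already sketched at the start of Sec.~\ref{s:qlrulazy_convergence}: by the complete cell symmetry, the coverage $\mu(\bigcup_{b:\, x_f^{(b)}=1} S_b)$ depends only on $k_f=\sum_b x_f^{(b)}$; denote this common value $g(k_f)$. Problem~\eqref{e:static_opt} is then equivalent to
\[
\maxim_{\mathbf k\in\mathbb Z_{\ge 0}^F}\;\sum_{f=1}^F \lambda_f\, g(k_f)\quad \text{s.t.}\quad \sum_f k_f = BC,\ k_f \le B,
\]
because any $\mathbf k$ satisfying these constraints admits at least one legal placement (a Hall-type feasibility argument, using $k_f\le B$ and uniform cache capacity $C$). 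I would then argue that $g$ is concave: by symmetry, $g(k{+}1)-g(k)=\mu(S_{b_0}\setminus\bigcup_{i=1}^{k}S_{b_i})$ for any distinct cells $b_0,b_1,\ldots,b_k$, and enlarging $k$ only enlarges the subtracted union, so the marginal decreases. This is exactly the submodularity of the coverage set function specialised to its symmetric reduction.

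For a separable concave integer maximisation with a single budget equality and individual upper bounds, the marginal-greedy---increment, at each of the $BC$ steps, the coordinate $f$ that maximises $\lambda_f\,(g(k_f{+}1)-g(k_f))$---is optimal by a standard one-swap exchange argument. By the trefoil symmetry, the marginal~\eqref{e:marginal} computed by the problem-level greedy at a pair $(f,b)$ with $x_f^{(b)}(l{-}1)=0$ equals $\lambda_f\,(g(k_f(l{-}1){+}1)-g(k_f(l{-}1)))$ regardless of which $b$ is chosen; therefore, up to tie-breaking on $b$, the content selected at every step of the problem-level greedy coincides with some marginal-greedy choice, and the resulting $\mathbf k$-vector is optimal provided that the $BC$ placements can be carried out in sequence.

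The hard part will be showing that cache-placement can be scheduled so that the greedy never paints itself into a corner (i.e.~so that at every step a non-full cache not already storing the chosen content is available). I plan to handle this by an inductive exchange argument: maintain the invariant that $\X(l{-}1)$ extends to some optimal $\X^*$, and at step $l$ pick $f_l$ from the argmax contents so that $k_{f_l}(l{-}1)<k^*_{f_l}$, which automatically supplies a cache $b_l$ with $(f_l,b_l)\in\X^*\setminus\X(l{-}1)$. If no argmax content is ``needed'' by $\X^*$ in this sense, a one-swap in $\X^*$---removing a copy of a needed content and inserting a copy of greedy's choice---is neutral or strictly improving by concavity of $g$ together with the greedy marginal inequality, producing an alternative optimum that restores the invariant. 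The preceding steps are largely bookkeeping; it is this final exchange argument in which the full cell symmetry of the trefoil is essential.
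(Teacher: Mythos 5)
Your proposal is correct and follows essentially the same route as the paper's proof: both exploit the trefoil symmetry to reduce Problem~\eqref{e:static_opt} to choosing copy-counts $k_f$, observe that the per-copy marginal gains $\Delta\Lambda_f(h)$ are decreasing in $h$, and conclude that selecting the $B\times C$ largest marginals is simultaneously the optimum and the greedy output (the paper packages this as a unit-size knapsack, you as a separable concave integer program plus an exchange argument---the same underlying decomposition). Your treatment is in fact slightly more careful on a point the paper asserts without proof, namely that a copy-count vector with $\sum_f k_f = BC$ and $k_f\le B$ is always realizable as a placement with $C$ distinct contents per cache and that the step-by-step greedy never runs out of admissible $(f,b)$ pairs.
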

The proof relies on mapping problem~\eqref{e:static_opt} to a knapsack problem with $F\times C$ objects with unit size for which the greedy algorithm is optimal.

We observe that for generic values of the parameters, all the marginal improvements considered by the greedy algorithm are different and then the greedy algorithm admits a unique possible output (apart from BSs label permutations). 

\begin{prop}
\label{p:qlrulazy_convergence}
Consider a trefoil topology and assume there is unique possible output of the greedy algorithm, denoted as $\mathbf k^*=(k_1^*, k_2^*, \dots k_F^*)$. Then,  under the approximate model in Sec~\ref{s:model}, a system of \qlrulazy{} caches asymptotically converges to $\mathbf k^*$ when $q$ vanishes in the sense that 
\[\lim_{q \to 0} \pi(q,\mathbf k^*)=1.\]
\end{prop}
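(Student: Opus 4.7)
My plan is to exploit the complete cell symmetry of the trefoil to reduce the analysis to a family of independent birth-death chains, one per content, to write down the invariant distribution in closed form, and then to identify the asymptotic scaling of $T_c(q)$ that concentrates the law on $\mathbf k^*$.

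First, by cell symmetry, for each content $f$ the aggregated process $N_f(t)=\sum_b X_f^{(b)}(t)$ is a reversible birth-death chain on $\{0,1,\dots,B\}$. Let $\psi(j)$ denote the $\mu$-measure of the portion of a cell not covered by any $j$ prescribed other cells; by the trefoil symmetry this depends only on $j$, and the \qlrulazy{} rate~\eqref{e:lazy_rate} becomes $\Lambda_f^{(b)}=\lambda_f\psi(j)$ whenever exactly $j$ of the other caches store $f$. The birth and death rates in state $j$ of $N_f$ are thus $\beta_f(j)=(B-j)q\lambda_f\psi(j)$ and $\delta_f(j)=j\,\nu(\lambda_f\psi(j-1))$ with $\nu(\Lambda)=\Lambda/(e^{\Lambda T_c}-1)$, and detailed balance gives
\begin{equation*}
\pi_f(k)\;\propto\;\binom{B}{k}\,q^k\,\prod_{j=0}^{k-1}\bigl(e^{\lambda_f\psi(j)T_c}-1\bigr).
\end{equation*}
Because under approximation~A$1$ different contents evolve independently given $T_c$, the full invariant law factorises as $\pi(q,\mathbf k^*)=\prod_f\pi_f(k_f^*)$; it is then enough to show $\pi_f(k_f^*)\to 1$ for every $f$.

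Second, I would use the uniqueness of the greedy output to identify the correct scaling of $T_c(q)$. Since the greedy algorithm performs exactly $BC$ picks in decreasing order of marginal gain, uniqueness forces the strict ``water-filling'' gap
\begin{equation*}
\max_{f}\,\lambda_f\,\psi(k_f^*) \;<\; \min_{f:\,k_f^*>0}\,\lambda_f\,\psi(k_f^*-1).
\end{equation*}
Pick any $\theta$ in this open interval and set $T_c(q)=-(\log q)/\theta$. The ratio $\pi_f(k)/\pi_f(k-1)$ is then $\Theta\bigl(q^{\,1-\lambda_f\psi(k-1)/\theta}\bigr)$, which diverges for $k\le k_f^*$ (since $\lambda_f\psi(k-1)\ge\lambda_f\psi(k_f^*-1)>\theta$ by monotonicity of $\psi$) and vanishes for $k>k_f^*$ (since $\lambda_f\psi(k-1)\le\lambda_f\psi(k_f^*)<\theta$). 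Hence each $\pi_f$ concentrates on $k_f^*$; by dominated convergence the expected per-cache occupancy satisfies $\sum_f\E[N_f]/B\to\sum_f k_f^*/B=C$ (the greedy allocation fills every cache), so this scaling is consistent with the characteristic-time constraint~\eqref{e:constraint}.

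Finally, I would close the argument by uniqueness and monotonicity. The aggregate occupancy $\sum_f h_f^{(b)}(T_c,q)$ is strictly increasing in $T_c$, hence the model's characteristic time is uniquely determined for each $q$, and the preceding analysis shows that only $T_c(q)\sim -(\log q)/\theta$ for some $\theta$ in the gap is compatible with~\eqref{e:constraint}; every such $\theta$ yields the same limit, whence $\pi(q,\mathbf k^*)\to 1$. The main obstacle is this water-filling step: one must rule out that $T_c(q)$ grows at a rate outside $\Theta(\log(1/q))$, because strictly slower growth would collapse every $\pi_f$ on $0$ (vanishing total occupancy) while strictly faster growth would push the most popular contents' chains toward $B$ (overfilling the caches). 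A two-sided squeeze using two admissible thresholds inside the gap above, combined with the monotonicity of occupancy in $T_c$, formalises this.
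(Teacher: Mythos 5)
Your overall architecture is the same as the paper's: reduce to one reversible birth--death chain per content (coupled only through $T_c$), write the invariant law in product form via detailed balance, and let the capacity constraint force $T_c(q)=\Theta(\log(1/q))$ so that the exponent comparison of $q\,e^{\Delta\Lambda_f(h)T_c}$ selects exactly the $B\times C$ largest marginal gains, i.e.\ the greedy allocation (which Proposition~\ref{p:greedy_flower} identifies with the optimum). Your rate bookkeeping is essentially right: your death rate $j\,\nu(\lambda_f\psi(j-1))$ coincides with the paper's $k_f\,\Delta\Lambda_f(k_f)/(e^{\Delta\Lambda_f(k_f)T_c}-1)$ because $\lambda_f\psi(k-1)=\Delta\Lambda_f(k)$; your birth rate $(B-j)q\lambda_f\psi(j)$ differs from the paper's $q\left(\Lambda_f(B)-\Lambda_f(j)\right)$ by a state-dependent factor between $1$ and $B$ (you charge a miss once per idle covering cache, the paper once per request), but this only alters the $q$- and $T_c$-independent prefactors and is immaterial for the limit.

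The genuine gap is in the last step, which you yourself flag as ``the main obstacle''. Two admissible thresholds $\theta_1<\theta_2$ \emph{inside} the open water-filling gap do not give a squeeze: at both $T_c=\log(1/q)/\theta_i$ the expected total occupancy converges to $BC$, possibly from the same side, so monotonicity does not let you conclude that the model's $T_c(q)$ lies between them. Taking thresholds just \emph{outside} the gap does confine $\log(1/q)/T_c(q)$ asymptotically, but only to the closed interval whose endpoints are themselves marginal gains; along a subsequence where $\log(1/q_n)/T_c(q_n)$ converges to such an endpoint, the quantity $q_n e^{\Delta\Lambda_{f_0}(\hat k_{f_0}+1)T_c(q_n)}$ for the boundary content $f_0$ attaining that marginal gain can converge to anything in $[0,\infty]$, and your ratio test is then silent about $f_0$'s limiting number of copies. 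The paper closes exactly this case: every content other than $f_0$ concentrates, so the expected total number of copies equals $\sum_{f\neq f_0}\hat k_f+\hat k_{f_0}+\pi_{f_0}(\hat k_{f_0}+1)$, and since this must equal the integer $C\times B$ the boundary probability is forced to $0$ or $1$, in either case reproducing the greedy allocation. You need this integrality/counting argument (or, equivalently, the paper's cluster-value argument over subsequences); the monotonicity-plus-interior-threshold squeeze as sketched does not deliver it.
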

In order to prove this result, we  write down the explicit stationary probability for the system, taking advantage of the fact that the MC is reversible, and we study its limit. 
\changes{ In conclusion the greedy algorithm and \qlrulazy{} are equivalent in the case of   trefoil topology.}	

%
%
%

\subsection{\qlrulazy{} achieves  a local maximum hit ratio}
\label{s:qlrulazy_convergence_general}
We say that a caching configuration $\mathcal{C}$ is locally optimal if it  provides the highest aggregate hit rate among all the caching 
 configurations which can be obtained  from $\mathcal{C}$ by replacing one content in one of the caches.
  

 \begin{prop}
\label{p:qlrulazy_convergence_general}
A spatial network of \qlrulazy{} caches asymptotically achieves a locally-optimal caching configuration when  $q$ vanishes.\footnote{In the most general case, the adoption of different parameters $q$ is required at different cells for the implementation of the 
\qlru\ policy.} 
 \end{prop}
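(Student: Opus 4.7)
The plan is to generalise Proposition~\ref{p:qlrulazy_convergence}, whose proof exploits the trefoil's symmetry to write an explicit stationary distribution, by replacing that calculation with a timescale-separation/best-response induction valid on arbitrary topologies. The first step is to recast local optimality as a cache-by-cache condition: using (\ref{e:marginal}), swapping content $f_{\text{out}}\in$ cache $b$ with $f_{\text{in}}\notin$ cache $b$ changes the global hit rate by $\lambda_{f_{\text{in}}}\mu(S_b\setminus\bigcup_{b'\neq b:x_{f_{\text{in}}}^{(b')}=1}S_{b'})-\lambda_{f_{\text{out}}}\mu(S_b\setminus\bigcup_{b'\neq b:x_{f_{\text{out}}}^{(b')}=1}S_{b'})$, so $\X^*$ is locally optimal iff, at every cache $b$, the $C$ stored contents have the largest such marginal rates, each of which depends only on the other caches' sub-configuration $\X_f^{(-b)*}$ for the relevant content $f$. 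Inspecting (\ref{e:lazy_rate}) then shows that these marginals are precisely the \qlrulazy{} insertion rates when $x_f^{(b)}=0$ and (with the same expression up to the union excluding $b$) the timer-renewal rates when $x_f^{(b)}=1$.

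The second step is to import the single-cache asymptotic optimality of \qlru{} from \cite{garetto16}: with fixed IRM rates $\{\Lambda_f\}$, the stationary distribution of an isolated cache concentrates, as $q\to 0$, on the configuration storing the $C$ contents with the largest $\Lambda_f$. Within the model of Sec.~\ref{s:single_cell} this is the statement that $h_f=q(e^{\Lambda_f T_c}-1)/(q(e^{\Lambda_f T_c}-1)+1)$ develops a sharp threshold around $\Lambda_f=\log(1/q)/T_c$. To transport this result into the coupled network I would introduce per-cache parameters $q^{(b)}=q^{\gamma_b}$ with distinct exponents $0<\gamma_1<\gamma_2<\cdots<\gamma_B$, which is exactly the regime anticipated by the footnote of the proposition, so that the $B$ caches evolve on exponentially separated timescales as $q\to 0$ (cache $1$ fastest, cache $B$ slowest).

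The third step is an induction on $B$. The base case $B=1$ is the isolated cache. For the inductive step, on the fast timescale of caches $1,\dots,B-1$ the slowest cache is essentially frozen at some state $\X^{(B)}$; the remaining caches then form a $(B-1)$-cache \qlrulazy{} subsystem whose marginal insertion rates still coincide with the hit-rate marginals of the original problem restricted to $\{\X:\X^{(B)}=\X^{(B)*}\}$, so by the inductive hypothesis they converge to a local optimum of that restricted hit rate. On its own slow timescale cache $B$ then sees the other caches in their equilibrium best response to its state, and the single-cache result yields that it stores the $C$ contents with the largest marginal rates given that equilibrium, which is exactly local optimality with respect to swaps at cache $B$. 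Combining these two partial optimalities gives local optimality against every single-swap move.

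The main technical obstacle will be making the timescale-separation argument rigorous for the coupled chain $\X_f(t)$, because the characteristic times $T_c^{(b)}$ depend implicitly on $q$ and on \emph{all} configurations through the fixed-point constraint (\ref{e:constraint}), so the process is not a standard small-noise perturbation of a deterministic drift. I would address this by verifying that, under the chosen $\gamma_b$-hierarchy, the ratio of the transition rates of any two caches diverges as $q\to 0$, which legitimises a classical averaging/singular-perturbation argument for Markov chains. A possibly cleaner alternative is a Freidlin--Wentzell quasi-potential analysis of the family $\{\X_f(t)\}_q$: the log-ratios of forward and backward transition rates across a single-swap move define a quasi-potential whose minima are precisely the locally-optimal configurations, and the invariant measure must therefore concentrate on them as $q\to 0$.
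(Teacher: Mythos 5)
Your first step is sound and matches the paper's setup: local optimality does reduce to a per-cache comparison of marginal rates, and the \lazy{} insertion/renewal rates are exactly those marginals. The gap is in how you propose to get the coupled chain to concentrate on such a configuration. Your main route --- per-cache parameters $q^{(b)}=q^{\gamma_b}$ with \emph{distinct} exponents so that the caches evolve on separated timescales, followed by an induction on $B$ --- misreads the role of the per-cache exponents and, more importantly, cannot be made to work. In the paper's argument $\gamma_b$ is defined endogenously by $T_c^{(b)}(q)\sim \gamma_b^{-1}\log(1/q)$ (forced by the buffer constraint~\eqref{e:constraint}), and the choice $q_b=q^{\gamma_b}$ is made to render the characteristic times asymptotically \emph{equal} across cells, i.e.\ to normalize the resistance scales --- the opposite of your intent. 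More fundamentally, there is no clean timescale hierarchy to exploit: the eviction rate of content $f$ at cache $b$ scales as $\Lambda_f^{(b)}(\x)\,q^{\Lambda_f^{(b)}(\x)/\gamma_b}$ while insertions scale as $q^{(b)}\Lambda_f^{(b)}(\x)$, so \emph{within a single cache} different contents and configurations already evolve on polynomially different timescales in $q$, some faster and some slower than transitions at any other cache. The picture ``cache $B$ is frozen while caches $1,\dots,B-1$ equilibrate'' therefore never holds uniformly over states, and the averaging/induction step collapses. (Even granting it, the induction would also need the fast subsystem to concentrate on a \emph{single} configuration as a function of the slow cache's state, which you do not address.)

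Your fallback --- a quasi-potential analysis --- is in fact the paper's actual method (regular perturbations of Markov chains in the sense of Young, with resistances equal to the exponents of $q$ and stochastically stable states characterized by minimum-resistance in-trees). But your one-sentence version of it assumes the quasi-potential is obtained by summing log-ratios of forward and backward rates along a swap, which presumes path-independence, i.e.\ reversibility; the paper explicitly notes the chain is \emph{not} reversible in general, which is why the potential must be defined through in-trees rooted at each state. Showing that the minimizers of that in-tree potential are locally optimal is then the substantive part of the proof: one must establish $\Lambda^{(B)}_{f_1}(\x'_{f_1})\ge\gamma_B$ for any stored content and $\Lambda^{(B)}_{f_2}(\x_{f_2})\le\gamma_B$ for any non-stored one, via surgery on minimum-resistance trees (the second inequality additionally requiring the auxiliary potential $\phi$ of~\eqref{e:potential2} and a dominance lemma). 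None of that argument appears in your sketch, so the step from ``concentration on quasi-potential minimizers'' to ``local optimality'' remains unproven.
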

The proof is in \citeapp{a:qlrulazy_convergence_general}.
 In this general case  the difficulty of proving the assertion stems from the fact that the MC representing content dynamics  is not anymore reversible, 
 and it is then difficult to derive an analytical expression for its steady state  distribution. Instead, our proof relies on results for regular perturbations of Markov chains~\cite{young93}.

The analytical results in this section justify why for small, but strictly positive, values of $q$, \qlrulazy{} performs better than \qlrublind{} and \qlruone. 
More in general, what seems fundamental to approach the maximum hit ratio is the coupling of the \lazy{} update rule, that reacts to the ``right marginal benefit'' for problem~\eqref{e:static_opt}, with a caching policy that is effective to store the most popular contents. \qlru{} is one of them, \twolru{} is another option. Moreover, \twolru{} has been shown to react faster to popularity changes. For this reason, in the next section we also include results for \twolrulazy{}.
\changes{At last we wish to remark that the (static) cache configuration selected by greedy algorithm  is in general not locally optimal, as a consequence  of  the greedy nature of the algorithm and the fact that marginal gains at a cell change during the execution of the algorithm (since they depend 
on the configuration of neighbouring cells).}
 

\section{Performance in a Realistic Deployment}
\label{s:realistic}

In this section we  evaluate the performance of the \lazy{} update rule in a more realistic scenario. To this purpose, we have extracted the positions of $10$ T-Mobile BSs in Berlin from the dataset in~\cite{bs_dataset} and we use a real content request trace from Akamai Content Delivery Network~\cite{neglia17tompecs}. The actual identity of the users and of the requested objects was obfuscated. The BS locations are indicated in Fig.~\ref{f:berlin}. We refer to this topology simply as the Berlin topology.  The trace includes 400 million requests issued over 5 days from users in the same geographical zone for a total of 13 million unique contents. In our simulations we randomly assign the requests to the users who are uniformly spread over the area.

	\begin{table}
	   \centering
	    \caption{Trace: basic information\label{table:basic_stats}}{
	    \centering
	    \begin{tabular}{|l|r|}
	      \hline
	      Time span	& $5$ days\\
	      \hline
	      Number of requests received  & $4 \cdot 10^8$ \\
	      \hline
	      Number of distinct objects  & $13  \cdot 10^6$\\    
	      \hline
	    \end{tabular}}
	  \end{table}

\begin{figure}[htbp]
   \centering
   \includegraphics[width=\myFigureScale\linewidth]{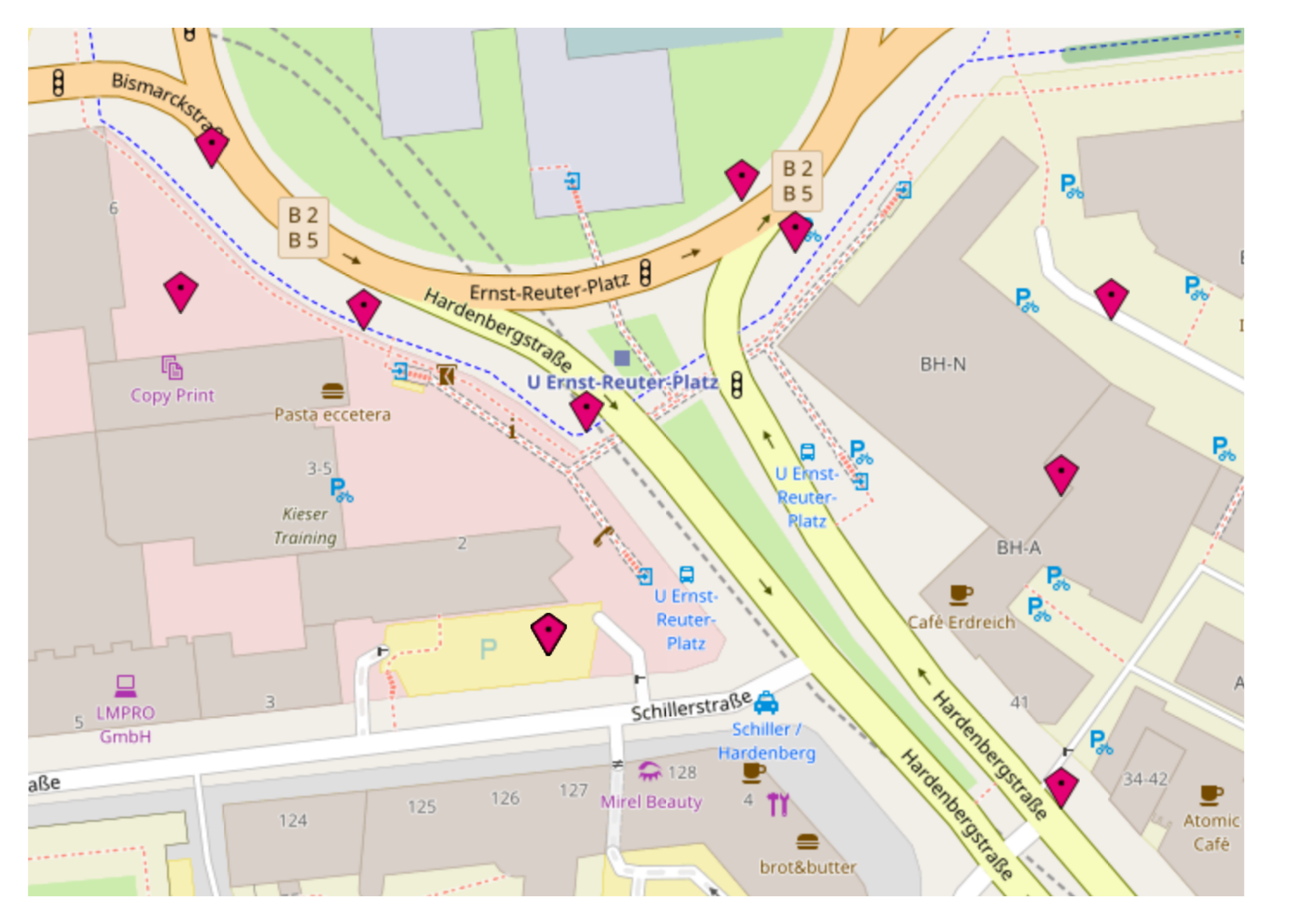}
   \caption{T-Mobile BS configuration in Berlin.} 
   \label{f:berlin}
\end{figure}

Figure~\ref{f:berlin_comparison} compares the performance of different caching policies in this scenario, when the transmission range of the BSs varies from 25 to 250 meters and correspondingly a user is covered \changes{on average} by 1.1 up to 9.4 BSs. We observe 
that the \lazy{} update rule still outperforms \one{} and \blind{} when coupled with \qlru{} or \twolru. 
Moreover, for the higher density scenarios, \twolrulazy, \twolrublind, \qlrulazy{} and (to a minor extent) \qlrublind{}   outperform the  static allocation that has been obtained by the greedy algorithm assuming known the request rates  of each content over the future $5$ days. While we recall that the greedy algorithm provides only  a $1/2$-approximation 
of the optimal allocation  for problem~\eqref{e:static_opt}, 
we highlight that this apparently surprising result is most likely to be due to the non-stationarity of the request process. In this case an uninformed dynamic policy (like \qlru{} or \twolru) can outperform an informed static one, by dynamically adapting content allocation in caches to the short-term request rate of contents.

\begin{figure}[htbp]
   \centering
   \includegraphics[width=\myFigureScale\linewidth]{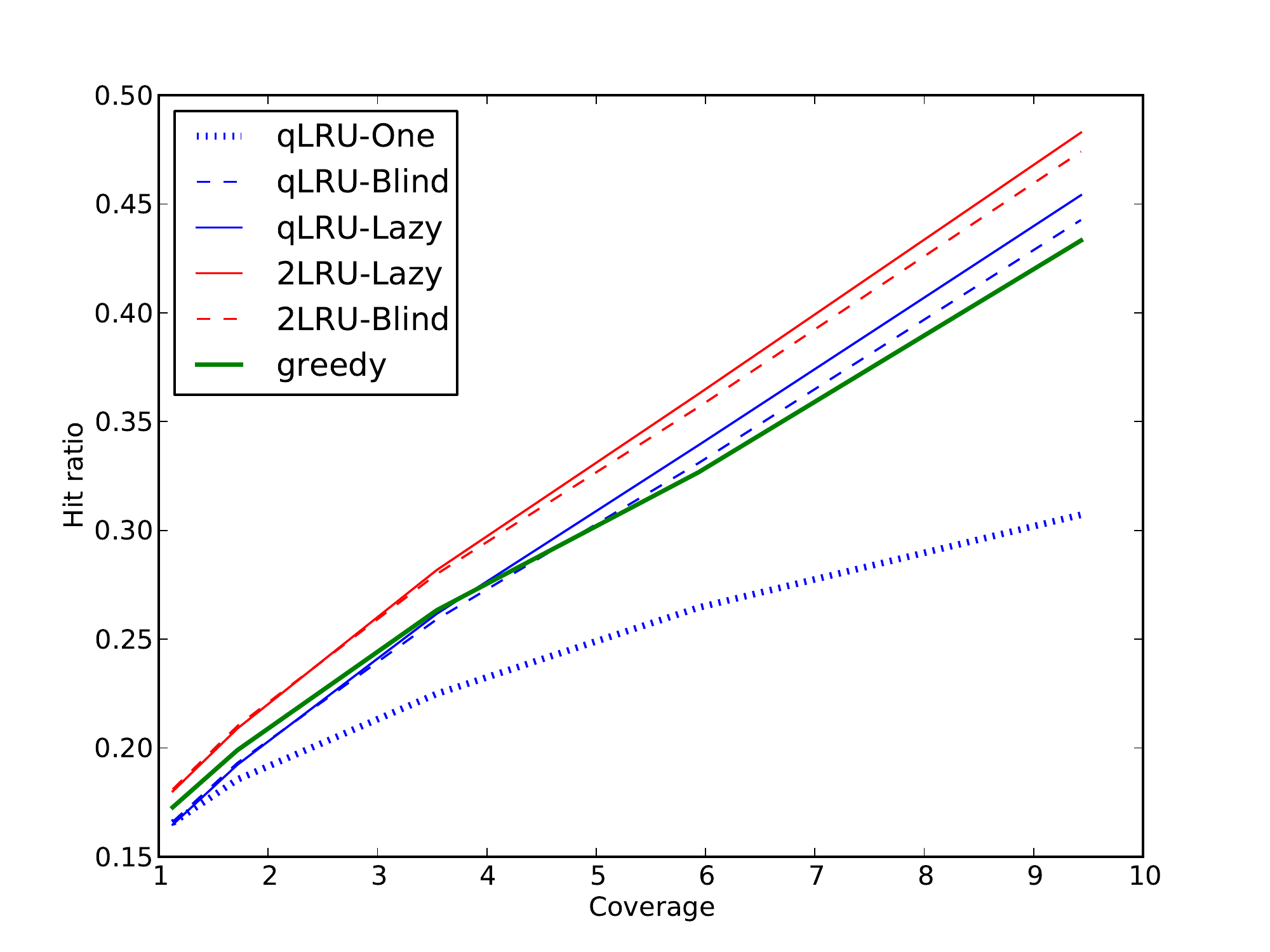}
   \caption{Berlin topology and real CDN request trace. \qlru{} employs $q=0.01$.} 
   \label{f:berlin_comparison}
\end{figure}

In order to deepen the comparison between uninformed and informed policies, we have considered the  operation scenario that is usually suggested from supporters of informed policies (see e.g.~\cite{shanmugam13}): the optimal allocation is computed the day ahead and contents are pushed to the caches during the night when the network is unloaded. Figure~\ref{f:comparison_day_by_day} shows then the performance  for two coverage values ($1.1$ and $5.9$) on a daily basis as well as for the whole 5 days. In this case, the oracle greedy algorithm computes a static content allocation each night knowing exactly the future request rates for the following day. Instead, the forecast greedy algorithm uses the request rates seen during the current day as an estimation for the following one.
The oracle greedy benefits from the knowledge of the request rates over a single day: it can now correctly identify and store those  contents that are going to be popular the following day, but are not so popular over the whole trace. For this reason,  it outperforms the greedy scheme that receives as input  the average   rates over the whole 5-day period. When cells have limited overlap (Fig.~\ref{f:comparison_day_by_day}~(a)), the oracle greedy algorithm still outperforms the dynamic policies, but \twolrulazy{} is very close to it. Interestingly, in the higher density setting (Fig.~\ref{f:comparison_day_by_day} (b)), this advantage disappears.  The performance of the \twolrulazy{} allocation  becomes preferable than both oracle greedy (with daily rates), and \qlrulazy{}. 
Temporal locality appears to have a larger impact in  high density scenarios!

At last we wish to remark that  the allocation of the oracle greedy algorithm is an ideal one, because it assumes the future request rates to be known. A practical algorithm will be   necessarily based on some estimates such as the forecast greedy. Our results show a significant performance loss due to this incorrect input (as already observed in~\cite{neglia18ton}). Both \twolrulazy{} and \qlrulazy{} perform significantly better  than forecast greedy (\twolrulazy{} guarantees  between 10\% and 20\% improvement).
\changes{Interestingly, our results contradict one of the conclusions in~\cite{elayoubi15}, i.e.~that at the BS level reactive caching policies would not be efficient because the content request rate is too low, and content prefetching would perform better. We observe that \cite{elayoubi15} considers a single BS scenario and that perfect popularity knowledge is available. We have performed some additional simulations considering the current typical request rate at a BS as identified in~\cite{elayoubi15} and we still observe qualitatively the same behaviour illustrated in Fig.~\ref{f:comparison_day_by_day}. These additional experiments are described in Appendix~\ref{a:roberts_experiments}. Moreover, data traffic rate in cellular networks is constantly increasing and this improves the performance of reactive policies (but not of prefetching) as already observed in~\cite{elayoubi15}.}

\begin{figure}%
         \centering
         \subfloat[Coverage 1.1]{\includegraphics[width=\myFigureScale\linewidth]{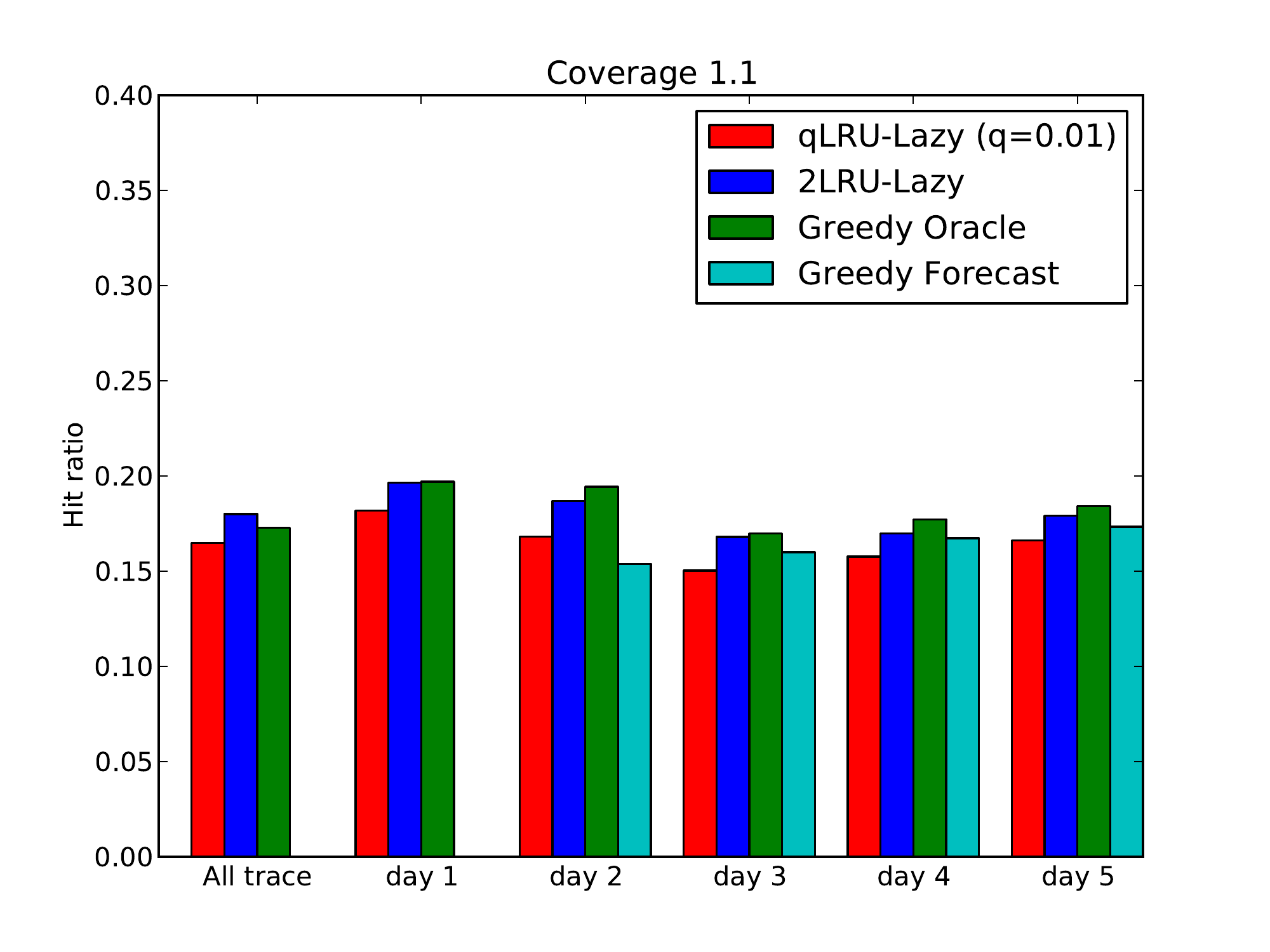}}\hspace{1cm}
         \subfloat[Coverage 5.9]{\includegraphics[width=\myFigureScale\linewidth]{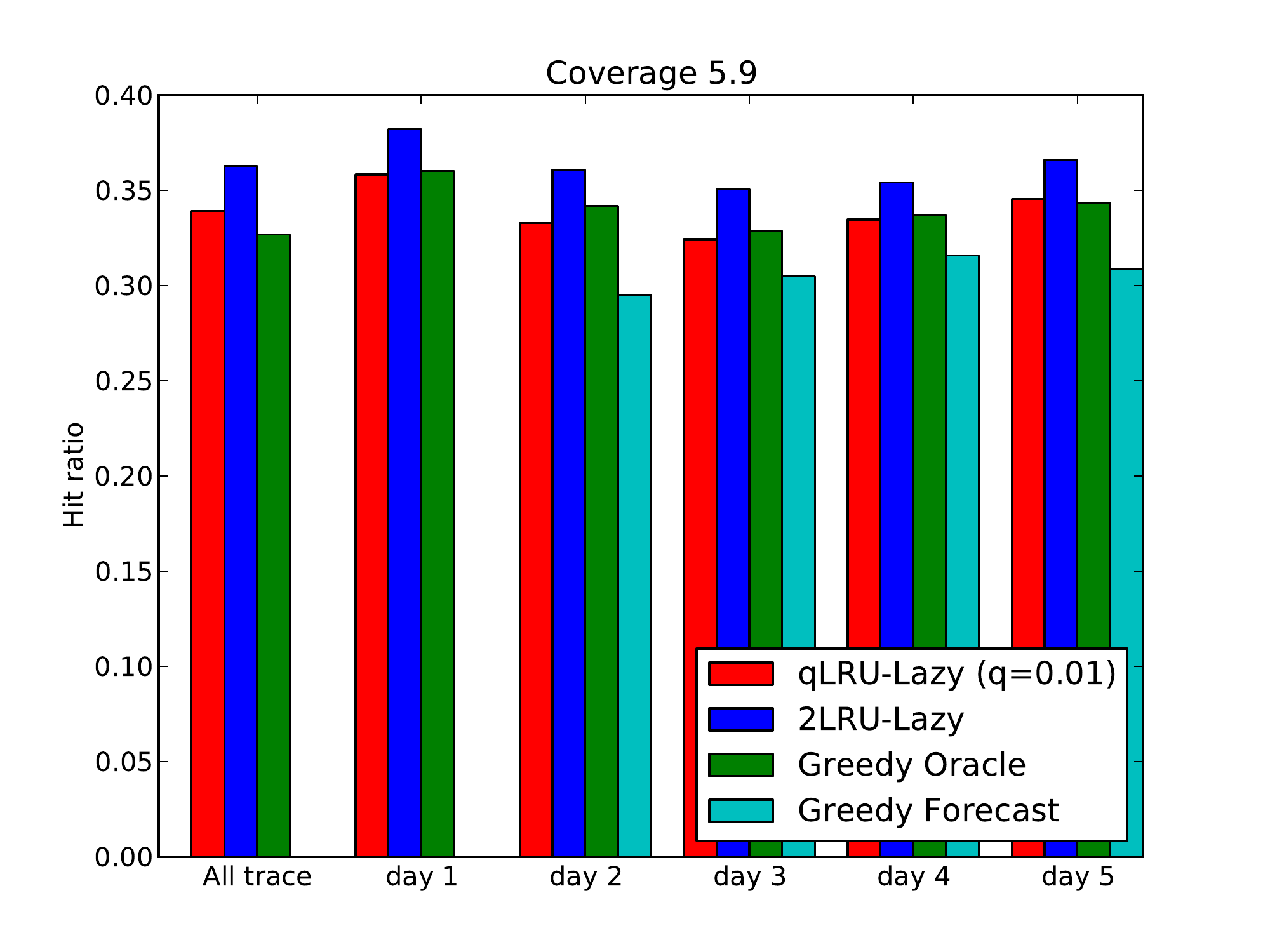}}\\
         \caption{Berlin topology and real CDN request trace. Comparison of the different policies over the whole trace and for each of the $5$ days. \qlru{} employs $q=0.01$.}
         \label{f:comparison_day_by_day}
       \end{figure}

\section{Conclusions} 
\label{s:conclusions}
In this paper, we  have shown that ``uniformed'' schemes can effectively implicitly coordinate different caches 
in dense  cellular systems when smart (but simple) update policies like \lazy{} are used.  Indeed we show that they can  achieve
a performance, which is comparable to that of the best ``informed''
schemes in static traffic  scenarios. Moreover, ``uniformed'' schemes  better adapt  to  dynamic   scenarios,
often outperforming implementable ``informed'' schemes.
For once, then, sloth is not the key to poverty, not at least to poor performance.

We have also proposed a new approximate analytical framework to assess the performance of \lq \lq uniformed'' schemes. The predictions  of our model are extremely accurate (hardly distinguishable from Monte Carlo simulations in most cases). 

This work was partly funded by the French Government (National Research Agency, ANR) through the ``Investments for the Future'' Program reference \#ANR-11-LABX-0031-01.


%
\IEEEpeerreviewmaketitle




\bibliographystyle{IEEEtran}
\bibliography{caching.bib}



\appendices
 \section{Insensitivity for \fifo{} caches}
\label{a:fifo_insensitivity}
 
The proof of Proposition~\ref{p:insensitivity} follows.
 
\begin{proof}
The vector $\X_f(t)$ indicates where copies of content $f$ are stored at time $t$. Under CTA for \fifo, when a copy is inserted at cache $b$, a timer is set to the deterministic value $T_{c}^{(b)}$ and decreased over time. When the timer reaches zero, the content is erased from cache $b$. We denote by $c_f^{(b)}(t)$ the residual value of such timer at time $t$. The system state at time $t$, as regards content $f$, is then characterized by $\mathbf Y_f(t) \triangleq (\X_f(t), \{c_f^{(b)}(t), \forall b \; | X_f^{(b)}(t)=1\})$, i.e.~by the current allocation of content $f$ copies and their residual timers. The process $\mathbf Y_f(t)$ is a Generalized Semi-Markov Process (GSMP)~\cite{konstantopoulos89}.

Let $\pi(\x_f)$ be the stationary distribution of $\X_f(t)$. This distribution is in general a function of the timer distributions. If it depends on them only through their expected values, then the GSMP is said to be \emph{insensitive}. In this case the stationary distribution remains unchanged if we replace all the timers with exponential random variables and then the GSMP simply becomes a continuous time Markov Chain with state $\X_f(t)$. Then, Approximation A$1$ is correct whenever the GSMP $\mathbf Y_f(t)$ is insensitive.

A GSMP is insensitive if and only if its stationary distribution $\pi(\x_f)$ satisfies some partial balance equations (a.k.a.~Matthes' conditions)~\cite[Eq.~(2.2)]{konstantopoulos89}, as well as the usual global balance equations. For our system, Matthes' conditions can be written as 
\begin{equation}
\label{e:reversibility}
\pi\left(0,\x_f^{(-b)}\right) \Lambda_f^{(b)} = \pi\left(1,\x_f^{(-b)}\right) \frac{1}{T_c^{(b)}} \;\; \forall \x_f^{(-b)},
\end{equation}
i.e.~the rate at which the timer $c_f^{b}$ is activated is equal to the rate at which it expires. Conditions \eqref{e:reversibility} are equivalent to the corresponding MC being reversible. This completes the proof.

\end{proof}

\section{\qlrulazy's optimality}
\label{a:qlrulazy_optimality}

In the trefoil topology, any cell is equivalent to any other, so it does not matter where the copies of a given content are located, but just how many of them there are. 
Let $\Lambda_f(k)$ be the total request rate for content $f$ from users located in $k$ cells,\footnote{
	It does not matter which ones, because of the symmetry.
} i.e.:
\[\Lambda_f(k)\triangleq\lambda_f \mu\Bigg( \bigcup_{b=1}^{k} S_b\Bigg).\]
$\Lambda_f(k)$ corresponds then to content $f$ hit ratio when $k$ copies of the content are stored at the caches.
Moreover, we denote by $\Delta \Lambda_f(k)\triangleq \Lambda_f(k) - \Lambda_f(k-1)$ the marginal increase of the hit ratio due to adding a $k$-th copy of content $f$.
We observe that $\Delta \Lambda_f(k)$ is decreasing in $k$ and that it holds: 
\[\Lambda_f(k)=\sum_{h=1}^k \Delta \Lambda_f(h).\]

\subsection{Proof of Proposition~\ref{p:greedy_flower}}
\label{s:proof_greedy}
\begin{proof}
The proof is rather immediate. 
First observe that
by exploiting the properties of the cell-trefoil topology, \eqref{e:static_opt}  can be rewritten as:

\begin{align}
\label{e:static_opt_semp}
\text{maximize}
 & \sum_{f=1}^F  \Lambda_f (k_f), \\ 
 \text{subject to}
  & \sum_{f=1}^F k_f =B\times C . \nonumber
\end{align}

Now, caching problem \eqref{e:static_opt_semp} can be easily mapped  to a (trivial)  knapsack problem with $F\times C$
objects of unitary size, according to the following lines: for every content $f$  we define $C$  different virtual objects $(f,h)$ with $1 \le h \le C$, with associated weights:
\[
 w_{(f,h)}=\Delta\Lambda_f(h),
\]
i.e.~the weight $w_{(f,h)}$ is equal to the marginal increase of the hit ratio, which is obtained by storing the 
$h$-th  copy of content $f$ into the caching system.

The objective 
of the  knapsack problem  is to find the set   ${\mathcal S}_{\text{opt}}$ of $F\times C$ objects,  which maximizes 
the sum of all the associated weights.  Indeed \eqref{e:static_opt_semp}  can be rewritten as: $\max_{(k_1,\ldots, k_f, \ldots, k_F)}
\sum_f\sum_{h=1}^{k_f}  w_{(f,h)}$. In particular, observe that  since $ w_{(f,h)}\le  w_{(f,h-1)}$, virtual  object $(f,h)\in  {\mathcal S}_{\text{opt}}$ only if 
$(f,h-1)\in  {\mathcal S}_{\text{opt}}$.  This implies that ${\mathcal S}_{\text{opt}}$ provides a feasible  solution for the 
original caching problem,  where $k_f$ is equal to the largest $h$ such that $(f,h)\in {\mathcal S}_{\text{opt}}$.

Finally, note that, by construction,  ${\mathcal S}_{\text{opt}}$ is the set composed of the $F\times C$ objects  with the largest 
value; therefore by construction,  ${\mathcal S}_{\text{opt}}$  corresponds to:
i)    the caching allocation that maximizes the global hit rate (i.e. the 
allocation that solves \eqref{e:static_opt_semp});
ii)  moreover, it is the only solution of the greedy algorithm, under the  assumption that   object values are all different, (i.e.~for generic values of the parameters).

\end{proof}

\subsection{Proof of Proposition~\ref{p:qlrulazy_convergence}}
\begin{proof}
Under our approximated model,  system dynamics are described by $F$ Markov chains, one for each content, coupled by the characteristic times.
The symmetry of the trefoil topology implies that the characteristic time at each cache has the same value that we denote simply as $T_c$.

Every MC is a birth-death process. In particular, under \qlrulazy,  for content $f$, the transition rate from state $k_f-1$ to $k_f$ is 
\[r(k_f-1,k_f) \triangleq q \left(\Lambda_{f}(B) - \Lambda_f(k_f-1) \right),\] 
and from state $k_f$ to $k_f-1$ it is 
\[r(k_f,k_f-1) \triangleq k_f \frac{\Delta \Lambda_f(k_f)}{e^{\Delta\Lambda_f(k_f)  T_c }-1}.\]

Let us define $\rho_f(k) \triangleq r(k_f-1,k_f)/r(k_f,k_f-1)$. The stationary probability to have $k_f$ copies of content $f$ is then
\begin{align*}
	\pi_f(k_f) & = \frac{\prod_{h=1}^{k_f} \rho_f(h)}{1+\sum_{k=1}^B \prod_{h=1}^k \rho_f(h)}\\
			& = \frac{A_{f,k_f} \prod_{h=1}^{k_f} q \left( e^{\Delta\Lambda_f(h) T_c} -1\right)}{1 + \sum_{k=1}^B A_{f,k} \prod_{h=1}^k q \left( e^{\Delta\Lambda_f(h) T_c} -1\right)}, 
\end{align*}
where 
\[A_{f,k} \triangleq \prod_{h=1}^k \frac{\Lambda_{f}(B) - \Lambda_f(h-1)}{h \Delta\Lambda_f(h)}\]
are values that do not depend on $q$ or $T_c$ and they will not play a role in the following study of the asymptotic behaviour.

Under CTA, the buffer constraint is expressed imposing that the expected number of contents at a cache is equal to the buffer size.
If the system is in state $k_f$, any given BS has probability $k_f/B$ to be one of the $k_f$ storing it, then the buffer constraint is
\[
\sum_{f=1}^F \sum_{k=1}^B \frac{k}{B} \pi_f(k)=C,
\]
or equivalently:
\begin{equation}
\label{e:constraint2}
\sum_{f=1}^F \sum_{k=1}^B k \pi_f(k)=C \times B.
\end{equation}
We focus now our attention on the stationary distribution when $q$ converges to $0$. As $q$ changes, the characteristic time changes as well. We write $T_c(q)$ to express such dependence. When $q$ converges to $0$, $T_c(q)$ diverges, otherwise all the probabilities $\pi_f(k_f)$ would converge to $0$ and constraint~\eqref{e:constraint2} would not be satisfied. It follows that:

\begin{align}
\label{e:asymp_prob}
\pi_f(q,k_f) 	& \underset{q\to 0}{\sim} \frac{A_{f,k_f} \prod_{h=1}^{k_f}
\left(q  e^{\Delta\Lambda_f(h) T_c(q)}\right)}{1+ \sum_{k=1}^B A_{f,k} \prod_{h=1}^{k} \left(q  e^{\Delta\Lambda_f(h) T_c(q)}\right)}.
\end{align}


Let us consider a sequence $(q_n)_{n \in \mathbb N}$ that converges to $0$ such that  it exists $\lim_{n \to \infty} \ln(1/q_n)/T_c(q_n)=\hat \Lambda$. The value $\hat \Lambda$ is also said to be a cluster value for the function $\ln(1/q)/T_c(q)$. It holds that for any marginal hit ratio $\Delta \Lambda_f(h)$
\[\lim_{n \to \infty} q_n e^{\Delta\Lambda_f(h) T_c(q_n)}=
\begin{cases}
		+\infty, & \textrm{ if } \Delta\Lambda_f(h) > \hat \Lambda,\\
		0, & \textrm{ if } \Delta\Lambda_f(h) < \hat \Lambda.
\end{cases}		
\]
Let $\hat k_f \triangleq \argmax_h \{\Delta\Lambda_f(h) > \hat\Lambda \}$. 
For generic values of the parameters the values $\{\Delta \Lambda_f(h), \forall f=1,\dots F, h=1,\dots B\}$ are all distinct by hypothesis, then there can be at most one content $f_0$ 
for which it holds $\Delta\Lambda_f(\hat k_{f_0}+1)= \hat\Lambda$.
For any other content it follows that the dominant term in the denominator of \eqref{e:asymp_prob} is $\prod_{h=1}^{\hat k_f} \left(q  e^{\Delta\Lambda_f(h) T_c(q)}\right)$ and then for $f\neq f_0$:
\[\lim_{n \to \infty} \pi_f(q_n,k_f)=
\begin{cases}
		1, & \textrm{if } k_f=\hat k_f,\\
		0, & \textrm{otherwise},
\end{cases}		
\]
i.e. asymptotically exactly $\hat k_f$ copies of content $f$ would be stored.
For content $f_0$, both the term $\prod_{h=1}^{\hat k_{f_0}} \left(q  e^{\Delta\Lambda_f(h) T_c(q)}\right)$ and $\prod_{h=1}^{\hat k_{f_0}+1} \left(q  e^{\Delta\Lambda_f(h) T_c(q)}\right)$ could be dominant. Then all the $\pi_{f_0}(q_n,k_f)$ converge to $0$ for $k_f \notin \{\hat k_{f_0}, \hat k_{f_0}+1\}$. The total expected number of copies stored in the system would then be:
\begin{align*}
\sum_{f \neq f_0} \hat k_f & + \hat  k_{f_0} \pi_{f_0}(0,\hat k_{f_0})+ (\hat k_{f_0}+1) \pi_{f_0}(0,\hat k_{f_0}+1)  \\
				& = \sum_{f \neq f_0} \hat k_f  +  \hat k_{f_0} +\pi_{f_0}(0,\hat k_{f_0}+1). 
\end{align*}
Because of~\eqref{e:constraint2}, this sum has to be equal to the integer $C\times B$, then one of the two following mutually exclusive  possibilities must hold, or
\[ \sum_{f \neq f_0} \hat k_f  +  \hat k_{f_0} = C \times B
\]
and then  $\pi_{f_0}(0,\hat k_{f_0}+1)=0$, or
\[ \sum_{f \neq f_0} \hat k_f  +  \hat k_{f_0} + 1 = C \times B
\]
and then  $\pi_{f_0}(0,\hat k_{f_0}+1)=1$. In any case, the conclusion is that, when $q$ converges to $0$, for each content $f$ a fixed number of copies $k_f$ is stored at the cache. $k_f$ is such that the marginal hit-ratio increase due to the $k_f$-th copy is among the largest $C \times B$ marginal hit-ratios (and the $(k_f+1)$-th copy is not among them). This allocation coincides with the solution of the greedy algorithm. 

%
%
%
%
%

\end{proof}

 \subsection{Proof of Proposition~\ref{p:qlrulazy_convergence_general}}
 \label{a:qlrulazy_convergence_general}
 \begin{proof}

 For a given content $f$, let  $\bold{x}_f$ and $\bold{y}_f$  be two possible states of the MC. We say that 
  $\bold{x}_f \le \bold{y}_f$    whenever  $x_f^{(b)} \le  y_f^{(b)}$ for each $b$; furthermore we denote with  
 $|\bold{x}_f|=\sum_b x_f^{(b)}$ the number of stored  copies of the content in state $\bold{x}_f$, which we call weight of the state $\bold{x}_f$.

 Now  observe that by construction, transition rates in the MC are different from 0 only between pair of states  $\bold{x}_f$ and $ \bold{y}_f$,  
   such that: i)   $\bold{x}_f \le \bold{y}_f$ , ii)  $|\bold{x}_f |= |\bold{y}_f |-1$. 
   In such a case we say that $\bold{y}_f$ is a parent of $\bold{x}_f$ and   $\bold{x}_f$ is a son of $\bold{y}_f$. Moreover we say that $\x_f \to \y_f$ is an \emph{upward} transition, while  $\y_f \to \x_f$ is a downward transition.

   Let $\bold{y}_f$ be parent of $\bold{x}_f$ and let   $b_0$ be the index such that    $x_f^{(b_0)}<y_f^{(b_0)}$,  we have that the upward rate $\rho_{[\bold{x}_f \to \bold{y}_f]} = q \Lambda_f^{(b_0)}(\bold{x}_f)=\Theta(q)$ and the downward rate 
  $\rho_{[\bold{y}_f \to \bold{x}_f]} = 	\frac{ \Lambda_f^{(b_0)}(\bold{x}_f) }{e^{  \Lambda_f^{(b_0)}(\bold{x}_f) T_C^{(b_0)} }-1}$.
  
Now, as $q\to 0$  for every $f$ every upward rate  $r_{[\bold{x}_f \to \bold{y}_f]}$ tends to 0. Therefore necessarily the characteristic time of every cell $T_C^{(b)}$ must diverges. In fact, if it were not the case for a cache $b$,  none of the contents would be found in the cache $b$ asymptotically, because upward rates tend to zero, while downward rates would not. This would contradict the constraint:
\begin{equation}
 \sum_f\sum_{\bold{x}_f} x^{(b)}_f \pi (\bold{x}_f)=C   \qquad \forall b  \label{caching-loc-constraint}
\end{equation}
imposed by the CTA.
Therefore necessarily $T_C^{(b)}\to \infty$ for every cell $b$.  More precisely we must have 
$T_C^{(b)}=\Theta(\log \frac{1}{q})$  at every cache otherwise we fail to  meet~\eqref{caching-loc-constraint}.
Now we can always select a sequence $\{q_n\}_n$ such that $T_C^{(b)}(q_n)\sim \frac{1}{\gamma_{b}}(\log \frac{1}{q_n})$.

Let us now consider the uniformization of the continuous time MC $\bold{X}_f(t)$ with an arbitrarily high rate $\Lambda_T$ and the corresponding discrete time MC $\bold{X}_f(k)$ with transition probability matrix $P_{f,q}$. \changes{For $q=0$, the set of contents in the cache does not change, each state is an absorbing one and any probability distribution is a stationary probability distribution for $P_{f,0}$. We are rather interested in the asymptotic behaviour of the MC when $q$ converges to $0$. For $q>0$ the MC is finite, irreducible and aperiodic and then admits a unique stationary probability $\bold \pi_{f,q}$.} We call the states $\x_f$ for which $\lim_{q \to 0 } \bold \pi_{f,q}(\x_f) >0$ \emph{stochastically stable}. We are going to characterize such states. 

For what we have said above, it holds that the probability to move from $\x$ to the parent $\y$  is $P_{f,q}(\bold{x}_f,\bold{y}_f) \sim \Lambda_f^{(b_0)} q$, while $P_{f,q}(\bold{y}_f,\bold{x}_f) \sim \Lambda_f^{(b_0)} q^{\Lambda_f^{(b_0)}(\bold{x}_f)/\gamma_{b_0}}$. For each possible transition, we define its \emph{direct resistance} to be the exponent of the parameter $q$, then $r(\bold{x}_f,\bold{y}_f)=1$, $r(\bold{y}_f,\bold{x}_f)=\Lambda_f^{(b_0)}(\bold{x}_f)/\gamma_{b_0}$ and $r(\bold{x}_f,\bold{x}_f)=0$. If a direct transition is not possible between two states, then we consider the corresponding direct resistance to be infinite.  Observe that the higher the resistance, the less likely the corresponding transition. Given a sequence of transitions $(\x^1_f, \x^2_f \dots \x^n_f)$ from state $\x^1_f$ to state $\x^n_f$, we define its resistance to be the sum of the resistances, i.e.~$r(\x^1_f, \x^2_f \dots \x^n_f)=\sum_{i=1}^{n-1} r(\x^i_f, \x^{i+1}_f)$.

The family of Markov chains $\{P_{f,q}\}$ is a \emph{regular perturbation} \cite[properties (6-8)]{young93} and then it is possible to characterize the stochastically stable states as the minimizers of the potential function $V_f(\x_f)$ defined as follows. For each pair of states $\x_f$ and  $\x_f'$ let $R(\x_f, \x_f')$ be the minimum resistance of all the possible sequences of transitions from $\x_f$ to $\x_f'$ (then $R(\x_f, \x_f')\le r(\x_f, \x_f')$). Consider then the full meshed directed weighted graph whose nodes are the possible states of the MC and the weights of the edge  $(\x_f, \x_f')$ is $R(\x_f , \x_f')$. The potential of state $\x_f$ ($V_f(\x_f)$) is defined as the resistance of the minimum weight  in-tree (or anti-arborescence) rooted to $\x_f$. Intuitively the potential is a measure of the general difficulty to reach state $\x_f$ from all the other nodes. From Theorem~4 of~\cite{young93} it follows that $\x_f$ is stochastically stable if and only if its potential is minimal.

For each content $f$ we are then able to characterize which configurations are stochastically stable as $q$ converges to $0$. Moreover, this set of configurations must satisfy the constraint \eqref{caching-loc-constraint} at each base station $b$. We define then the cache configuration $\x = (\x_1, \x_2, \dots \x_F)$ to be \emph{jointly stochastically stable} if 1) for each content $f$ $\x_f$ is stochastically stable, 2) $\x$ satisfies \eqref{caching-loc-constraint} for each $b$. 

The last step in order to prove Proposition~\eqref{p:qlrulazy_convergence_general} is to show that a jointly stochastically stable cache configuration $\x=(\x_1, \x_2, \dots \x_F)$ is locally optimal, i.e.~that changing one content at a given cache does not increase the hit ratio. Without loss of generality, we consider to replace content $f_1$ present at cache $B$ with content $f_2$. Then, the cache allocation $\x$ changes from $\x_{f_1}=(x^{(B)}_{f_1}=1, \x^{(-B)}_{f_1})$ and $\x_{f_2}=(x^{(B)}_{f_2}=0, \x^{(-B)}_{f_2})$ to a new one cache allocation $\x'$, such that $\x'_{f_1}=(x'^{(B)}_{f_1}=0, \x^{(-B)}_{f_1})$ and $\x'_{f_2}=(x'^{(B)}_{f_2}=1, \x^{(-B)}_{f_2})$. 
Let $\eta_f(\x_f)$ denote the hit rate for content $f$ over the whole network under the allocation $\x_f$ and $\eta(\x=(\x_1,\dots \x_F))=\sum_{f=1}^F \eta_f(\x_f)$ the global hit rate across all the contents.
Lemma~\ref{l:hit_rate} below provides a formula for the hit rate $\eta(\x)$, from which we obtain that 
\begin{align}
	\eta(\x)\ge \eta(\x') & \Leftrightarrow \eta_{f_1}(\x_{f_1})+\eta_{f_2}(\x_{f_2}) \nonumber\\
	& \;\;\;\;\;\ge \eta_{f_1}(\x'_{f_1})+\eta_{f_2}(\x'_{f_2})\nonumber\\
			& \Leftrightarrow \Lambda^{(B)}_{f_1}(\x_{f_1}^{(-B)},0) \ge  \Lambda^{(B)}_{f_2}(\x_{f_2}^{(-B)},0)\nonumber\\
			& \Leftrightarrow \Lambda^{(B)}_{f_1}(\x'_{f_1}) \ge  \Lambda^{(B)}_{f_2}(\x_{f_2}).\label{e:rate_ineq}
\end{align}
In order to prove \eqref{e:rate_ineq}, we will show that 
\begin{align}
\Lambda^{(B)}_{f_1}(\x'_{f_1}) & \ge \gamma_B \label{e:rate_ineq1}\\ 
\Lambda^{(B)}_{f_2}(\x_{f_2}) &\le \gamma_B \label{e:rate_ineq2}.
\end{align}

\begin{figure*}%
         \centering
         \subfloat[Proof of Eq.~\eqref{e:rate_ineq1}]{\includegraphics[width=0.45\linewidth]{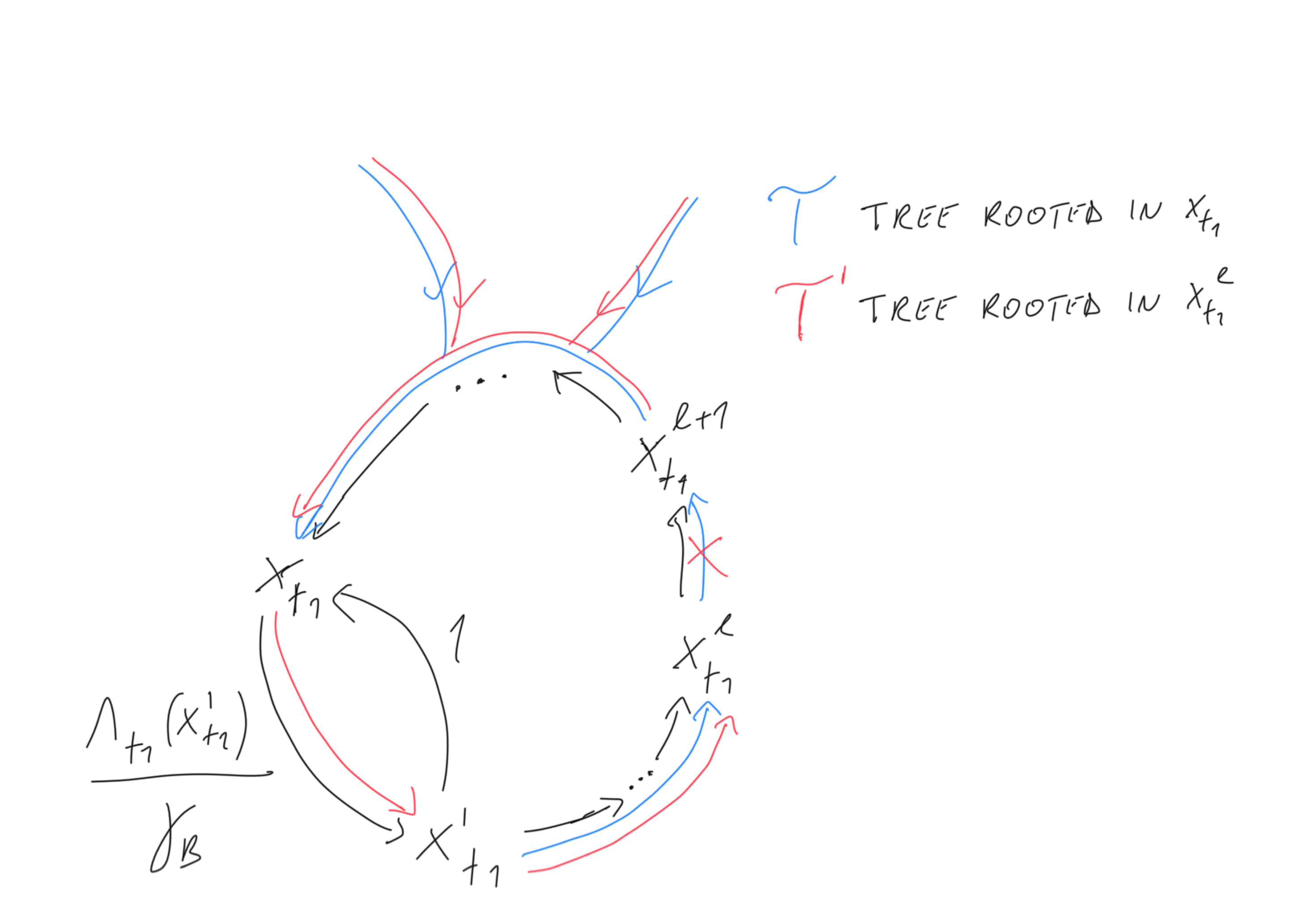}}\hfill
          \subfloat[Proof of Eq.~\eqref{e:rate_ineq2}]{\includegraphics[width=0.45\linewidth]{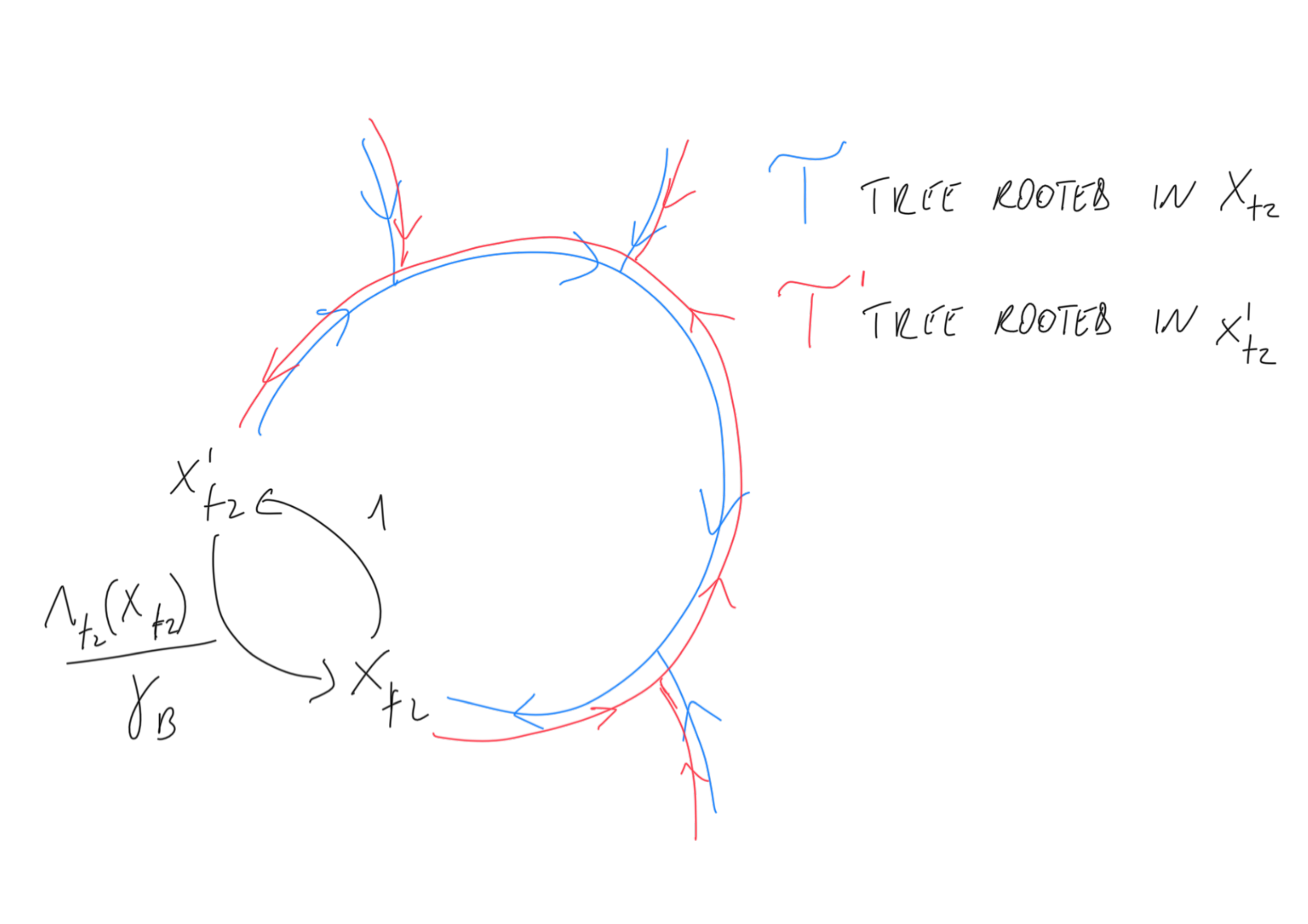}}\hfill
         \caption{Sketch of the constructions used to prove Proposition~\ref{p:qlrulazy_convergence_general}. }
         \label{f:proof_figure1}
       \end{figure*}

The state $\x'_{f_1}$ is a child of $\x_{f_1}$, then $r(\x_{f_1}, \x'_{f_1})=\Lambda^{(B)}_{f_1}(\x'_{f_1})/\gamma_B$. Consider the in-tree $\mathcal T$ rooted in $\x_{f_1}$ with minimal 
resistance 
and let $R(\mathcal T)(=V(\x_{f_1}))$ denote its resistance and 
$(\x^1_{f_1}=\x'_{f_1}, \x^2_{f_1}, \dots \x^k_{f_1}= \x_{f_1})$ be the sequence of transitions in 
$\mathcal T$ from $\x'_{f_1}$ to $\x_{f_1}$. One of these transitions, say it $\x_{f_1}^l \to \x^{l+1}_{f_1} $ corresponds to store the content $f_1$ in the cache $B$ and has resistance $1$. Consider now the in-tree $\mathcal T'$ rooted in $\x_{f_1}^l$ obtained from $\mathcal T$ removing the edge  $(\x_{f_1}^l, \x^{l+1}_{f_1}) $ and adding the edge $(\x_{f_1}, \x'_{f_1})$. Its resistance is $R(\mathcal T')=R(\mathcal T) -1 + \Lambda^{(B)}_{f_1}(\x'_{f_1})/\gamma_B$. From $R(\mathcal T')\ge V(\x_{f_1}^l) \ge V(\x_{f_1}) = R(\mathcal T)$ it follows \eqref{e:rate_ineq1}. A sketch of this construction is in Fig.~\ref{f:proof_figure1}.

The proof of~\eqref{e:rate_ineq2} is slightly more complex.
It is useful to introduce some additional definitions. Given two neighboring states $\x_f$ and $\x'_f$, we say that the transition $\x_f \to \x'_f$ is \emph{dominant} if $r(\x_f,\x'_f) \le r(\x'_f,\x_f)$. Let $\y_f$ be a parent of $\x_f$ with $x_f^{(b)}=0$ and $y_f^{(b)}=1$, we observe that the upward transition $\x_f \to \y_f$ is dominant if and only if $\Lambda^{(b)}_f(\x_f) \ge \gamma_b$. Similarly the downward transition $\y_f \to \x_f$ is dominant if and only if $\Lambda^{(b)}_f(\x_f) \le \gamma_b$. Let us also consider the function of state $\x_f$
\begin{equation}
\label{e:potential2}
\phi(\x_{f})\triangleq\lambda_f \mu\left(\bigcup_{{b \; |\;  x_{f}^{(b)}=1}}S_b\right) - \sum_{b \; |\;  x_{f}^{(b)}=1} \gamma_b.
\end{equation}
Lemma~\ref{l:dominance} guarantees that the function $\phi(.)$ cannot decrease along a dominant transition.

First we prove our result under the assumption that $\gamma_b=\gamma$ for every cell  $b$, then we provide 
the generalization to the most general case.
Let us prove~\eqref{e:rate_ineq2} by contradiction assuming that $\Lambda^{(B)}_{f_2}(\x_{f_2}) > \gamma_B$. In such case $\x'_{f_2} \to \x_{f_2}$ is not a dominant (downward) transition, and  $\phi(\x_{f_2}) < \phi(\x'_{f_2})$. 

Let  now $\mathcal T$ denote the in-tree rooted in $\x_{f_2}$ with minimal resistance and $\Pa=(\x^1_{f_2}= \x'_{f_2}, \x^2_{f_2}, \dots \x^k_{f_2}= \x_{f_2})$ be the sequence of transitions in $\mathcal T$ from $\x'_{f_2}$ to $\x_{f_2}$. At each transition $\x_{f_2}^l \to \x^{l+1}_{f_2}$  only one state variable changes, we denote by $b_l$ the corresponding index, representing the base station at/from which a copy of content $f_2$ is added/removed. 
By construction we have:
\[
 0> \phi(\x_{f_2}) - \phi(\x'_{f_2})= \sum_{1\le l\le k-1}  \phi(\x^{l+1}_{f_2}) - \phi(\x^{l}_{f_2})
\]
Now observe that:
\begin{align}
   \phi(\x^{l+1}_{f_2}) & - \phi(\x^{l}_{f_2})  \nonumber\\
   & =\Lambda_{f_2}^{(b_{l})}(\x^{l}_{f_2})-\gamma_{b_{l}} 
	 = \Lambda_{f_2}^{(b_{l})}(\x^{l+1}_{f_2})-\gamma_{b_{l}}\nonumber\\
   & =\gamma_{b_{l}}[r( \x^{l}_{f_2},\x^{l+1}_{f_2})-r(\x^{l+1}_{f_2}, \x^{l}_{f_2})]
\end{align}
if  transition  $ \x^{l}_{f_2}\to\x^{l+1}_{f_2}  $ is upward, 
and 
\begin{align}
   \phi(\x^{l+1}_{f_2}) & - \phi(\x^{l}_{f_2})=\gamma_{b_{l}}-\Lambda_{f_2}^{(b_{l})}(\x^{l}_{f_2}) \nonumber\\
   & =\gamma_{b_{l}}- \Lambda_{f_2}^{(b_{l})}(\x^{l+1}_{f_2}) \nonumber\\
   & =\gamma_{b_{l}}[r( \x^{l}_{f_2},\x^{l+1}_{f_2})-r(\x^{l+1}_{f_2}, \x^{l}_{f_2})]
\end{align}
if  transition  $ \x^{l}_{f_2}\to\x^{l+1}_{f_2}  $ is downward.
Therefore 
\begin{align}\label{delta-potential}
 \phi(&\x_{f_2}) - \phi(\x'_{f_2})= \nonumber\\
 	&\sum_{1\le l\le k-1}\gamma_{b_{l}}[r( \x^{l}_{f_2},\x^{l+1}_{f_2})-r(\x^{l+1}_{f_2}, \x^{l}_{f_2})]<0
\end{align}
 From which, under the assumption $\gamma_b=\gamma$ for any $b$, we have:
\begin{equation}\label{path-potential}
\sum_{1\le l\le k-1}r( \x^{l}_{f_2},\x^{l+1}_{f_2})>\sum_{1\le l\le k-1}r(\x^{l+1}_{f_2}, \x^{l}_{f_2})
\end{equation}
where the term on the LHS is the total resistance of path $\Pa$  while the term on the RHS is the total resistance of the reverse path 
$\widehat{\Pa}= (\widehat{\x}^1_{f_2}= \x_{f_2}, \widehat{\x}^2_{f_2}= \widehat{\x}^{k-1}_{f_2}, \dots \widehat{\x}^k_{f_2}= \x'_{f_2})$. 

Hence  if we consider the in-tree $\mathcal T'$ routed at $\x'_{f_2}$,  which is obtained from $\mathcal T$ by reverting all the edges of $\Pa$,
i.e.  $\mathcal T'= \mathcal T- \Pa + \widehat{\Pa}$, we obtain that $r(\mathcal T') <r(\mathcal T)$ contradicting the hypothesis. A sketch of this construction is in Fig.~\ref{f:proof_figure2}.

In the most general case, i.e. when $\gamma_b$ are different,  a set of \qlrulazy{} caches all with the same parameter $q$ is not  anymore guaranteed  to be locally optimal (previous proof fails because from \eqref{delta-potential} we cannot deduce \eqref{path-potential}).
However   we can still define a provable locally optimal scheme, if we  
 allow the   adoption of    different parameters $q$ at different cells for the implementation of the local
\qlru\ policy. In particular by   selecting  $q_b= (q_n)^{\gamma_b}$ we can force  characteristic times $T_C^{(b)}$ to be asymptotically equal at different cells.

 \emph{Direct resistances} for our generalized scheme satisfy:
 $r(\bold{x}_f,\bold{y}_f)=\gamma_b$, $r(\bold{y}_f,\bold{x}_f)=\Lambda_f^{(b_0)}(\bold{x}_f)$ when $\y$ is chosen to be a parent of $\x$.
 As a consequence, in this case, we have:
 \[
   \phi(\x_{f}) - \phi(\y_{f}) =r( \x_{f},\y_{f})-r(\x_{f}, \y_{f})
 \]
and 
\[
   \phi(\y_{f}) - \phi(\x_{f}) =r( \y_{f},\x_{f})-r(\y_{f}, \x_{f})
 \]
 Hence by   repeating exactly the same arguments as for the special case $\gamma_b=\gamma$, our generalized scheme  can proved to be  locally optimal.

\end{proof}

\begin{lem}
\label{l:hit_rate}
Given a cache configuration $\x=(\x_1, \dots \x_F)$, under $\qlrulazy$ the hit rate for content $f$ can be calculated as follows
\[\eta_f (\x_f)\triangleq \sum_{b=1}^B \mathbbm{1}(x^{(b)}_f=1) \Lambda^{(b)}_f(x^{(1)}_f, \dots x^{(b-1)}_f, 0, \dots, 0),\]
and the global hit rate is 
\[\eta(\x) \triangleq  \sum_{f=1}^F \eta_f(\x_f).\]
\end{lem}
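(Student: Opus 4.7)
The plan is to observe that the claimed formula is really just a measure-theoretic disjointing identity applied to the obvious expression for the hit rate, so the proof will be short.

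First, I would argue directly that for any fixed cache configuration $\x_f$, the total rate of requests for content $f$ that are served by \emph{some} cache equals
\[
\eta_f(\x_f) \;=\; \lambda_f\,\mu\!\left(\bigcup_{b\,:\, x_f^{(b)}=1} S_b\right).
\]
This is because a user $u$ requesting $f$ is served by a local cache if and only if $J_{u,f}\neq\emptyset$, which (given that $\x_f$ is fixed) happens exactly when $u$ lies in the coverage area of at least one BS that stores $f$. Since users form a Poisson process of density $\mu(\cdot)$ and each user issues requests for $f$ at rate $\lambda_f$, thinning and the linearity of intensity give the formula above. Note that this step does not rely on the particular choice of update rule, only on the fact that $\x_f$ describes where copies currently reside.

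Second, I would apply the standard disjointing of a finite union: order the BSs $b=1,\dots,B$ and write
\[
\bigcup_{b\,:\, x_f^{(b)}=1} S_b \;=\; \bigsqcup_{b\,:\, x_f^{(b)}=1}\!\!\left(S_b \setminus \bigcup_{b'<b,\;x_f^{(b')}=1} S_{b'}\right),
\]
which is a disjoint union. Taking $\mu(\cdot)$ on both sides, the measure of the left-hand side equals the sum of the measures of the pieces on the right. Multiplying through by $\lambda_f$ gives
\[
\eta_f(\x_f) = \sum_{b\,:\, x_f^{(b)}=1} \lambda_f\,\mu\!\left(S_b \setminus \bigcup_{b'<b,\;x_f^{(b')}=1} S_{b'}\right).
\]

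Third, I would identify the summand with the lazy rate evaluated at the truncated configuration. By definition~\eqref{e:lazy_rate},
\[
\Lambda_f^{(b)}\!\bigl(x_f^{(1)},\dots,x_f^{(b-1)},0,\dots,0\bigr)
= \lambda_f\,\mu\!\left(S_b \setminus \bigcup_{b'<b,\;x_f^{(b')}=1} S_{b'}\right),
\]
since setting the components with index $\ge b$ to $0$ removes $S_b$ itself and all higher-indexed cells from the subtracted union. Inserting the indicator $\mathbbm{1}(x_f^{(b)}=1)$ allows the sum to be extended over all $b$, yielding exactly the claimed expression for $\eta_f(\x_f)$. Summing over $f$ gives the global hit rate. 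There is no real obstacle here: the only thing to be careful about is that the disjointing is legitimate for arbitrary (possibly overlapping) measurable cells, which is immediate since $\mu$ is a measure, and that the ordering of BSs used in the telescoping matches the one used in truncating the argument of $\Lambda_f^{(b)}$, which holds by construction.
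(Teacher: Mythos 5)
Your proof is correct and follows essentially the same route as the paper's: write $\eta_f(\x_f)=\lambda_f\,\mu\bigl(\bigcup_{b:x_f^{(b)}=1}S_b\bigr)$, disjointify the union by BS index, and recognize each piece as the lazy rate~\eqref{e:lazy_rate} evaluated at the truncated configuration. The only difference is that you additionally justify the first identity via Poisson thinning, which the paper takes for granted.
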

\begin{proof}
The hit rate for content $f$ is 
\begin{align*}
\eta_f(\x_f) & = \lambda_{f}\mu\left(\bigcup_{b |  x_{f_2}^{(b)}=1}S_b\right) \\
	& = \lambda_f \sum_{b=1}^B \mathbbm{1}(x^{(b)}_f=1) \mu\left(S_b \setminus \bigcup_{b'< b |  x_{f_2}^{(b')}=1}S_b\right)\\
	& = \sum_{b=1}^B \mathbbm{1}(x^{(b)}_f=1) \Lambda^{(b)}_f(x^{(1)}_f, \dots x^{(b-1)}_f, 0, \dots, 0).
\end{align*}
\end{proof}

\begin{lem}
\label{l:dominance}
Given a dominant transition $\x_f \to \y_f$, it holds $\phi(\y) \ge \phi(\x)$.
\end{lem}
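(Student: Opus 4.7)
The plan is to unpack the definition of a dominant transition and directly compute the change $\phi(\y_f)-\phi(\x_f)$ in the two possible cases. Since neighboring states differ in exactly one coordinate $b$, there are only two cases to consider: either $\x_f\to\y_f$ is an upward transition ($x_f^{(b)}=0$, $y_f^{(b)}=1$), or it is a downward transition ($x_f^{(b)}=1$, $y_f^{(b)}=0$). In each case I will use the lazy rate formula~\eqref{e:lazy_rate} to express the marginal change in the covered area in terms of $\Lambda_f^{(b)}$, and then compare with $\gamma_b$.

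First I would handle the upward case. Using~\eqref{e:potential2}, adding base station $b$ to the set of cells storing~$f$ increases the covered area by $\mu\!\left(S_b\setminus\bigcup_{b'\ne b,\,x_f^{(b')}=1}S_{b'}\right)$, which by~\eqref{e:lazy_rate} equals $\Lambda_f^{(b)}(\x_f)/\lambda_f$ (note $x_f^{(b)}=0$, so the union in~\eqref{e:lazy_rate} is unaffected by whether $b$ itself is excluded). Therefore $\phi(\y_f)-\phi(\x_f)=\Lambda_f^{(b)}(\x_f)-\gamma_b$. Now, the direct resistances are $r(\x_f,\y_f)=1$ and $r(\y_f,\x_f)=\Lambda_f^{(b)}(\x_f)/\gamma_b$, so the transition is dominant iff $\Lambda_f^{(b)}(\x_f)\ge\gamma_b$, which is exactly the condition $\phi(\y_f)\ge\phi(\x_f)$.

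Next I would dispatch the downward case symmetrically. Removing base station $b$ decreases the covered area by $\mu\!\left(S_b\setminus\bigcup_{b'\ne b,\,x_f^{(b')}=1}S_{b'}\right)=\Lambda_f^{(b)}(\y_f)/\lambda_f$, so $\phi(\y_f)-\phi(\x_f)=\gamma_b-\Lambda_f^{(b)}(\y_f)$. The resistances are $r(\x_f,\y_f)=\Lambda_f^{(b)}(\y_f)/\gamma_b$ and $r(\y_f,\x_f)=1$, so dominance means $\Lambda_f^{(b)}(\y_f)\le\gamma_b$, again exactly equivalent to $\phi(\y_f)\ge\phi(\x_f)$.

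No step here is really an obstacle: the lemma is essentially a bookkeeping identity that makes the potential function $\phi$ ``monotone along dominant edges''. The only mildly subtle point to verify is that the increment $\mu(\bigcup_{b'|y_f^{(b')}=1}S_{b'})-\mu(\bigcup_{b'|x_f^{(b')}=1}S_{b'})$ really coincides with $\Lambda_f^{(b)}(\cdot)/\lambda_f$ evaluated at the state in which $x_f^{(b)}=0$, which follows by inclusion–exclusion since the union over $\{b'\ne b:x_f^{(b')}=1\}$ is identical at $\x_f$ and $\y_f$.
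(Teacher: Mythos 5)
Your proof is correct and follows essentially the same route as the paper's: in both the upward and downward cases you compute $\phi(\y_f)-\phi(\x_f)$ directly from the definition in~\eqref{e:potential2}, identify the marginal area change with $\Lambda_f^{(b)}(\cdot)/\lambda_f$ via~\eqref{e:lazy_rate}, and observe that the dominance condition on the resistances is exactly the sign condition needed. The paper's own proof writes out only the upward case and declares the downward case similar, so your version is just a slightly more explicit rendering of the same argument.
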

\begin{proof}
Let $b'$ be the index at which $\x_f$ and $\y_f$ differ.
If $\x_f \to \y_f$ is an upward dominant transition, then $\Lambda_f^{(b)}(\x_f) \ge \gamma_b$ and it follows:
\begin{align*}
\phi(&\y_{f})  = \lambda_f \mu\left(\bigcup_{{b \; |\;  y_{f}^{(b)}=1}}S_b\right) - \sum_{b \; |\;  y_{f}^{(b)}=1} \gamma_b\\
	& = \lambda_f \mu\left(\bigcup_{{b \; |\;  x_{f}^{(b)}=1}}S_b\right)
		- \sum_{b \; |\;  x_{f}^{(b)}=1} \gamma_b + \Lambda_f^{(b')}(\x_f) - \gamma_{b'}\\
	& \ge \phi(\x_f).
\end{align*}
The proof when $\x_f \to \y_f$ is a downward dominant transition is similar.
\end{proof}

\section{Comparison with [27]}
 \label{a:roberts_experiments}

\begin{figure}%
         \centering
         \includegraphics[width=\myFigureScale\linewidth]{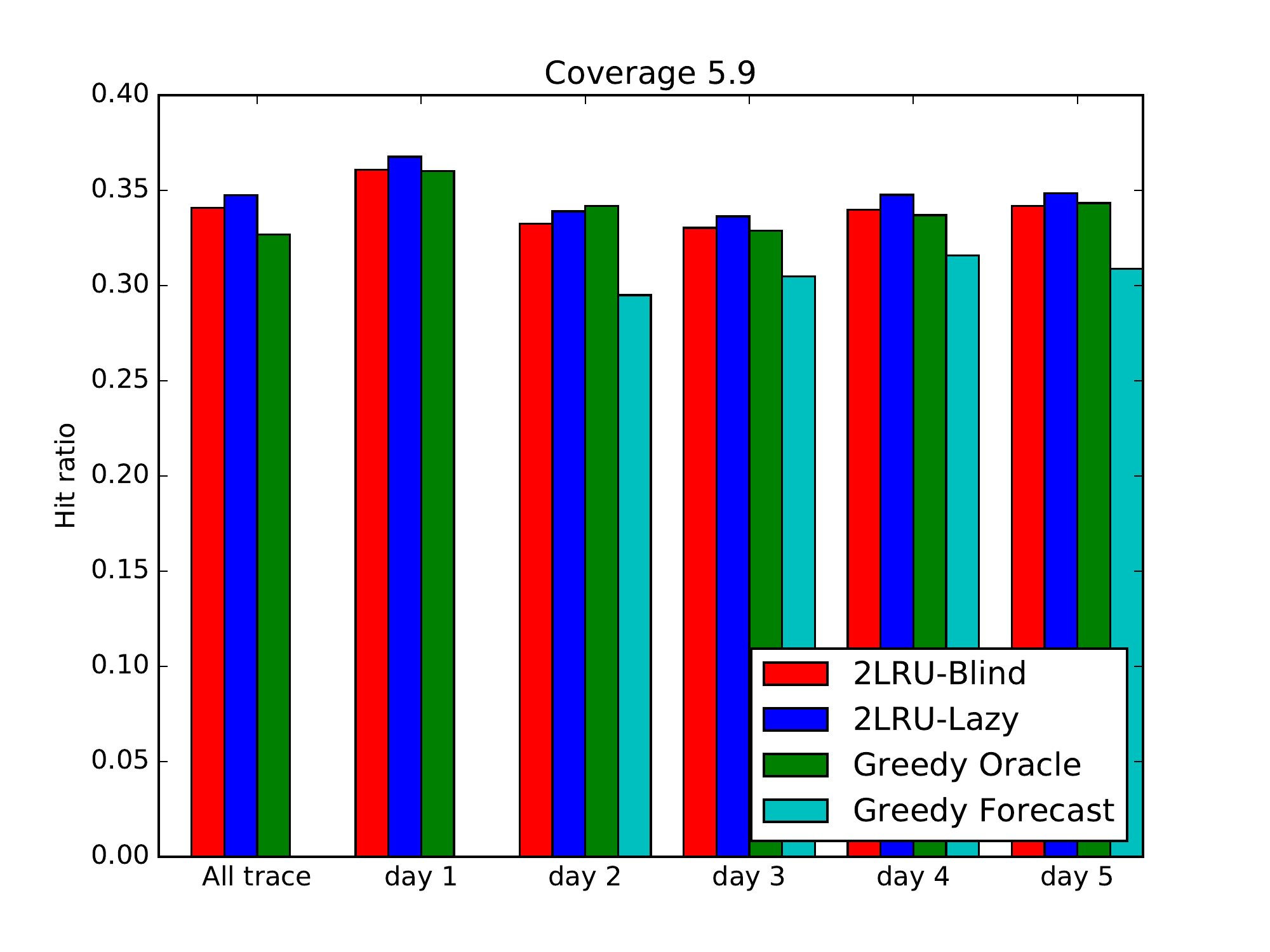}\\
         \caption{\changes{Berlin topology with average coverage $5.9$ and real CDN request trace. Comparison of the different policies over the whole trace and for each of the $5$ days.}}
         \label{f:comparison_day_by_day_sampled}
       \end{figure}

\changes{
As we mentioned in Sec.~\ref{s:realistic}, one of the conclusions of \cite{elayoubi15} is that under the current busy-hour demand (40 Mb/s), an ideal prefetching scheme, based on perfect knowledge of content popularities, performs better than reactive policies. The Akamai request trace we used in Sec.~\ref{s:realistic} corresponds to a busy-hour traffic per cell equal to 800Mbps. We decided then to carry on the same experiments illustrated in Fig.~\ref{f:comparison_day_by_day} sampling our trace by a factor 20. The results are shown in Fig.~\ref{f:comparison_day_by_day_sampled}. The conclusion is that, while \twolrulazy{} is impaired by the lower traffic rate, it still outperforms the static greedy allocation even if clairvoyant day-ahead estimates are available (at least 4 days out of 5). The figure shows also the results for \twolrublind{} (called \lru{} with prefilter in \cite{elayoubi15}): it appears to be overall slightly worse than the clairvoyant static allocation, but it still performs better than the more realistic static allocation based on day-ahead forecast.
} 
\end{document}